\newtheorem{theorem}{Theorem}
\newtheorem{definition}{Definition}
\newtheorem{assumption}{Assumption}
\newtheorem{proof}{Proof}
\newtheorem{example}{Example}
\newtheorem{remark}{Remark}
 \newcommand{\R}{{\mathbb{R}}}
\newcommand{\Z}{{\mathbb Z}}
\newcommand{\rhoo}{{\rho}}
\newcommand{\ie}{{\it i.e.}}
\DeclareMathOperator{\card}{card}
\DeclareMathOperator{\Rank}{rank}
\DeclareMathOperator{\diag}{diag}
\DeclareMathOperator{\GenDualSet}{GenDualSet}
\begin{document}

\begin{frontmatter}
                                              \title{ A Separation Theorem for Joint Sensor and Actuator Scheduling\\ with Guaranteed Performance Bounds}

\thanks[footnoteinfo]{This paper was not presented at any IFAC meeting.}
\thanks{This research was supported in part by a Vannevar Bush Fellowship from the Office of Secretary of Defense.}

\vspace{-0.5cm} 

\author[NEU]{Milad Siami}\ead{m.siami@northeastern.edu},    % Add the 
\author[MIT]{Ali Jadbabaie}\ead{jadbabai@mit.edu}              % e-mail address 

\address[NEU]{Electrical \& Computer Engineering Department,\\ Northeastern University, Boston, MA 02115 USA}
\address[MIT]{Institute for Data, Systems, and Society,\\ Massachusetts Institute of Technology, Cambridge, MA 02139 USA } % 

\begin{keyword}                           % Five to ten keywords,  
Network Analysis and Control; Fundamental Limits; Sparse Sensor and Actuator Selections; The Separation Principle.               % chosen from the IFAC 
\end{keyword}                             % keyword list or with the 
                                          % help of the Automatica 
                                          % keyword wizard

\begin{abstract}  We study the problem of jointly designing a sparse sensor and actuator schedule for linear dynamical systems while guaranteeing a control/estimation performance that approximates the fully sensed/actuated setting. We further prove a separation principle, showing that the problem can be decomposed into finding sensor and actuator schedules separately. However, {it is shown that this problem cannot be efficiently solved or approximated in polynomial, or even quasi-polynomial time for time-invariant sensor/actuator schedules;} instead, we develop  deterministic polynomial-time algorithms for a time-varying sensor/actuator schedule with guaranteed approximation bounds. Our main result is to provide a polynomial-time joint actuator and sensor schedule that on average selects only a constant number of sensors and actuators at each time step, irrespective of the dimension of the system. The key idea is to sparsify the controllability and observability Gramians while providing approximation guarantees for Hankel singular values. This idea is inspired by  recent results in theoretical computer science literature on sparsification. 
\end{abstract}

\end{frontmatter}
\section{Introduction}
\allowdisplaybreaks

One of the main challenges in realizing the promise of smart urban mobility is localization, perception, mapping and control with a myriad of sensors and actuators, e.g., camera sensors, data from 3-D mapping, LIDAR, electric motor, valve, etc. A key obstacle to this vision is the information overload, and the computational complexity of perception, mapping, and control using a large set of sensing and actuating modalities. A possible solution is to find a sparse yet important subset of sensors (actuators) and use those instead of using all available measurements (actuators) \citep{matni2016regularization, argha2017framework,fahroo2000optimal}. When the dimension of the state is large, finding the optimal yet low cardinality subset of features is like finding a needle in a haystack: the problem is computationally difficult and provably NP-Hard \citep{Alex2014,vassili1}. 

Often times, we are interested in reducing the control complexity, operation cost, and maintenance cost by not using all available actuators and sensors. The choice of sensors and actuators affect the performance, computational cost, and costs of the control system. 
As it is shown recently in~\cite{Alex2014} and \cite{vassili1}, the problem of finding a sparse set of input variables such that the resulting system is controllable, is NP-hard. Even the presumably easier problem of approximating the minimum number better than a constant multiplicative factor of $ \log n$ is also NP-hard \citep{Alex2014}. Other results in the  literature have shown network controllability by exploring approximation algorithms for the closely related subset selection problem \citep{Alex2014,summers2016submodularity, pequito2015complexity,nozari2019heterogeneity, bopardikar2017sensor}. More recently, some of the authors showed that even the problem of finding a sparse set of actuators to guarantee reachability of a particular state is hard and even hard to approximate \citep{jadbabaie2017minimal}.

Over the past few years, controllability and observability properties of complex dynamical networks have been subjects of intense study in the controls community~\citep{Alex2014,pasqualetti2014controllability, liu2016control, 7962975,muller1972analysis, tzoumas2016minimal,7171062, segio2017,NozariACC,yaziciouglu2016graph,summers2016submodularity, pequito2015complexity}. This interest stems from the need to steer or observe the state of  large-scale, networked systems such as power grids \citep{chakrabortty2011control}, social networks, biological and genetic regulatory networks \citep{ChandraBD11,marucci2009turn,rajapakse2012can}, and traffic networks \citep{SiamiTAC18}.
Previous studies have been mainly focused on solving the optimal sensor/actuator selection problem using the greedy heuristic, as approximations of the corresponding sparse-subset selection problem. However, in \cite{SiamiArXiv18}, we develop a framework to design a sparse actuator schedule for a given large-scale linear system with guaranteed performance bounds using deterministic polynomial-time and randomized approximately linear-time algorithms, and we gain new fundamental insights into approximating various performance metrics compared to the case when all actuators are chosen. In \cite{tzoumas2018sensing}, the authors show that a separation principle holds for the Linear-Quadratic-Gaussian (LQG) control problem.

In this paper, we build upon our previous work \citep{SiamiArXiv18} and consider the problem of jointly designing the sparse sensor and actuator schedule for linear dynamical systems, to ensure desired performance and sparsity levels of active sensors and actuators in time and space.
The joint sensor and actuator (S/A) scheduling problem involves selecting an appropriate number, activation time, position, and type of sensors and actuators. The idea is to essentially sparsify the choice of sensor and actuators both spatially and temporally.
 We show that by carefully designing a time-varying joint S/A selection strategy, one can choose, on average a constant number of sensors and actuators at each time, to approximate the Hankel singular values of the system, while sparsifying the sensor and actuator sets. One of our main contributions is to show that the classical time-varying joint S/A scheduling problem (originally studied by \cite{ATHANS1972397}), can be solved via random sampling. We also propose an alternative to submodularity-based methods and instead use recent advances in theoretical computer science.  %Athans in 1972

More importantly, we prove that a separation principle holds for the problem of jointly sparsifying the sensor and actuator set with performance guarantees. We show that the joint S/A scheduling problem can be divided into two separate problems:  the sparse sensor schedule and the sparse actuator schedule.

A preliminary version of some of our results in this article submitted for possible publication in a conference proceeding \citep{CDC2019-Milad}; however, their proofs are presented here for the first time. The manuscript also contains several new results including numerical examples, figures, tables, and proofs.

\allowdisplaybreaks

%%%%%%%%%%%%%%%%%%%%%%%%%%%%%%%%%%%%%%%%%%%%%%%%%%
\section{Preliminaries and Definitions}
\subsection{Mathematical Notations}
\label{sec:0}
Throughout the paper, discrete time index is denoted by $k$. The sets of real (integer),  and non-negative real (integer) are represented by $\R$ ($\mathbb Z$), and $\R_+$ ($\mathbb Z_+$), respectively. The set of natural numbers $\{i \in \Z ~:~1 \leq i \leq n\}$ is denoted by $[n]$. The cardinality of a set $\sigma$ is denoted by $\card(\sigma)$. Capital letters, such as $A$ or $B$, stand for real-valued matrices.
The $n$-by-$n$ identity matrix is denoted by $I$. Notation $A \preceq B$ is equivalent to matrix $B-A$ being positive semi-definite. The $n$ eigenvalues of $A \in \R^{n \times n}$ are shown by $\lambda_1(A), \lambda_2(A), \cdots, \lambda_n(A)$. $\lambda_{\max} (.)$ and $\sigma_{\max} (.)$ show the largest eigenvalue and singularvalue of a matrix, respectively. The transpose of matrix $A$ is denoted by $A^\top$. 
The rank of matrix $A$ is referred to by $\Rank(A)$. 

\subsection{Linear Systems, Gramian and Hankel Matrices}

We start with the canonical linear discrete-time, time-invariant dynamics 
\begin{eqnarray}
&& x(k+1) ~=~A \, x(k) ~+~B \, u(k),\label{model-aa}\\
&&y(k) ~=~ C x(k),
\label{model-a}
\end{eqnarray}
where $A \in \R^{n \times n}$, $B \in \R^{n \times m}$, $C \in \R^{p \times n}$ and $k \in \Z_+$.
The state matrix $A$ describes the underlying structure of the system and the interaction strength between the agents, input matrix $B$ represents how the control input enters the system, and output matrix $C$ shows how output vector $y$ relates to the state vector.

The controllability and observability matrices at time $t$ (where $t \geq n$) are given by
\begin{equation}
    \mathcal R(t)~=~\left [ B ~AB~A^2B~\cdots~ A^{t-1}B \right],
    \label{control-matrix}
\end{equation} 
and
\begin{equation}
    \mathcal O(t)~=~\begin{bmatrix}C\\
CA\\
CA^2\\
\cdots\\
CA^{t-1}\end{bmatrix},
    \label{observ-matrix}
\end{equation} 
respectively. 
 It is well-known that  from a numerical standpoint it is better to characterize controllability and observability in terms of the Gramian matrices at time $t$ defined as follows:
 \begin{equation}
\mathcal P(t) ~=~ \sum_{i=0}^{t-1} A^{i} BB^{\top} (A^{i})^\top ~=~ \mathcal R (t) \, \mathcal R^\top(t),
\label{gramian-1}
 \end{equation}
and
  \begin{equation}
\mathcal Q(t) ~=~ \sum_{i=0}^{t-1} (A^{i})^\top C^\top C A^{i} ~=~ \mathcal O^\top(t) \, \mathcal O(t).
\label{gramian-2}
 \end{equation}
 
 \begin{assumption}
\label{assum-1}
Throughout the paper, we assume that the system \eqref{model-aa}-\eqref{model-a} is an $n$-state minimal realization (i.e., the reachability and controllability matrices have full row rank). However, all results presented in this paper can be modified/extended to uncontrollable and unobservable systems.
\end{assumption}

For given linear systems \eqref{model-aa}-\eqref{model-a}, the Hankel matrix is defined as the doubly infinite matrix
\begin{eqnarray*}
\tiny
\mathcal H &=& \begin{bmatrix}
H_1& H_2& H_3& \cdots \\
H_2& H_3& H_4& \cdots \\
H_3& H_4& H_5& \cdots\\
\vdots&\vdots&\vdots&\ddots
\end{bmatrix} \\&=& \begin{bmatrix}C\\
CA\\
CA^2\\
\vdots
\end{bmatrix}\begin{bmatrix}
B&AB&A^2B&\cdots~
\end{bmatrix}\\ &=& \mathcal O \mathcal R,
\end{eqnarray*}
where $H_k = CA^{k-1}B$. The Hankel matrix can be viewed as a mapping between the past inputs and future outputs via the initial state $x(0)$.
Since $\mathcal H = \mathcal O \mathcal R$ and due to Assumption \ref{assum-1}, it follows that $\text{rank} (\mathcal H) = n$. The $n$ nonzero singular values of $\mathcal H$ can be computed by solving two Lyapunov equations (for controllability and observability Gramians) as follows
\begin{eqnarray*}
 \sigma_i(\mathcal H)  &=& \sqrt{\lambda_i(\mathcal H \mathcal H^\top)} \\
 &=& \sigma_i (\mathcal P\mathcal Q) \\
 &=& \lambda_i (\mathcal Q^{\frac{1}{2}}\mathcal P\mathcal Q^{\frac{1}{2}}).
 \label{h-singular}
 \end{eqnarray*}
 The Hankel matrix has a special structure: the elements (blocks) in lines parallel to the anti-diagonal are identical.
 It is well-known that the singular values of the Hankel matrix of a linear system are fundamental invariants of the system, denoting the most controllable and observable modes \citep{antoulas2005approximation}.
It is well known that the states corresponding to small nonzero Hankel singular values are difficult\footnote{The ``difficulty of controllability" of the system can be considered as the energy involved in moving the system from the origin to a uniformly random point on the unit sphere. It is well-known that this quantity can be characterized in terms of controllability Gramian. Moreover, the ``difficulty of observability" of the system can be considered as  ``how observable" the initial state is, over observation horizon. This quantity is closely related to the covariance of the estimate errors in the standard linear least squares problem and can be obtained in terms of observability Gramian.} to control and observe at the same time.

The Hankel norm gives the $L_2$-gain from past inputs to future outputs, and measures the extent to which past inputs effect future outputs of the system.
If the input $u(k)=0$ for $k \geq 0$ and the output is $y(k)$, then the Hankel norm is given by
{\begin{eqnarray*}
 \|G(z)\|_H &:=& \sup_{u \in \mathcal L_2 (-\infty,0)} \frac{\sum_{t=1}^\infty |y(t)|^2}{ \sum_{t=1}^{\infty} |u(-t)|^2}\\%\frac{\|y_+\|_{\mathcal L_2(0, \infty)}}{\|u_-\|_{\mathcal L_2(-\infty, 0)} }\\
 &=& \sqrt{\lambda_{\max}(\mathcal P \mathcal Q)} ~=~ \sigma_{\max} (\mathcal H),
 \end{eqnarray*}
 where $G(z)$ is a transfer function of dynamics \eqref{model-aa}-\eqref{model-a}, and $\mathcal L_2 (-\infty,0)$ is the space of square summable vector sequences in the interval $(-\infty, 0)$ (which means $\sum_{t=1}^{\infty} |u(-t)|^2 \leq \infty$).  }
In this work, we focus on the time-$t$ Hankel matrix
\begin{eqnarray*}
\tiny
\mathcal H(t) &=& \begin{bmatrix}
H_1& H_2& H_3& \cdots& H_t \\
H_2& H_3& H_4& \cdots& H_{t+1} \\
H_3& H_4& H_5& \cdots& H_{t+2}\\
&&\cdots&&\\
H_t&H_{t+1}& H_{t+2}& \cdots& H_{2t-1}
\end{bmatrix} \\&=& \begin{bmatrix}C\\
CA\\
CA^2\\
\cdots\\
CA^{t-1}\end{bmatrix}\begin{bmatrix}
B&AB&A^2B&\cdots&A^{t-1}B
\end{bmatrix} \\
&=&  \mathcal O(t)\mathcal R(t).
\end{eqnarray*}
Particularly, we have
\begin{eqnarray*}
 \sigma_i(\mathcal H(t))  &=& \sqrt{\lambda_i(\mathcal H(t) \mathcal H^\top(t))} = \sigma_i (\mathcal P(t)\mathcal Q(t)) \\
 &=& \lambda_i (\mathcal Q^{\frac{1}{2}}(t)\mathcal P(t)\mathcal Q^{\frac{1}{2}}(t)).
 \end{eqnarray*}

\begin{remark}
One way to lower the computational complexity of simulations of large-scale dynamical systems is finding a reduced-order model. 
 A common technique for model order reduction is the optimal Hankel-norm approximation. This method provides the best approximation of the original system in the Hankel semi-norm and received significant attention and related development in the 1980s \citep{glover1987model}.
The corresponding state-space realization is the balanced realization where $\tilde{\mathcal P}=\tilde {\mathcal Q} = \diag (\sigma_1, \cdots, \sigma_n)$ as proposed \cite{1102568}. 
In standard model reduction, first we obtain the balanced realization, and then the least observable and controllable modes are truncated.  However, for sparse S/A schedule, we sparsify inputs and outputs in space and time (the number of states does not change) and we utilize a different canonical state realization (see Section \ref{sec:weighted}). 
\end{remark}

{\subsection{Hankel-based Performance Metrics}

Similar to the {\it systemic} notions introduced in \cite{siami2017abstraction, SiamiArXiv18}, we define various performance metrics that capture both controllability and observability properties of the system. These measures are non-negative real-valued operators defined on  the set of all linear dynamical systems governed by  \eqref{model-aa}-\eqref{model-a} and quantify various measures of the performance.   All  of the metrics depend on the symmetric combination of Gramians (\ie ~$\mathcal Q^{\frac{1}{2}}(t) \mathcal P(t) \mathcal Q^{\frac{1}{2}}(t)$) which is a positive definite matrix. Therefore, one can define a systemic Hankel-based performance measure as an operator on the set of Gramian matrices of all $n$-state minimal realization systems, which we represent by $\mathbb S_+^n$.\footnote{The positive-semidefinite cone is denoted by $\mathbb S_+^n$.} We denote the Hankel-based Performance Metrics by $\rho: \mathbb S_+^n \rightarrow \mathbb R$. For many popular choices of $\rho$, one can see that they satisfy the following properties 

	\noindent {(i)} {\it Homogeneity:} for  all $\kappa >1$, 
	\[ \rhoo (\kappa A)~=~\kappa \rhoo (A);\]
	\noindent {(ii)} {\it Monotonicity:} if $A \preceq B$, then
				\[\rhoo (A) ~\leq~ \rhoo (B);\]
\noindent and we call them systemic. For example, the squared Hankel-norm of the system at time $t$ which is defined by 
\[ \rho(\mathcal Q^{\frac{1}{2}}(t) \mathcal P(t) \mathcal Q^{\frac{1}{2}}(t)) := \lambda_{\max}(\mathcal Q^{\frac{1}{2}}(t) \mathcal P(t) \mathcal Q^{\frac{1}{2}}(t)),\]
is systemic. We note that similar criteria have been developed in the experiment design literature \citep{ravi2016experimental,kempthorne1952design,allen-zhu17e}.}

\section{Matrix Reconstruction and Sparsification}
\label{sec:spars}

{The key idea in \cite{SiamiACC18} and \cite{SiamiArXiv18} is to approximate the time-$t$ controllability Gramian as a sparse sum of rank-$1$ matrices, while controlling the approximation error. To this end, a key lemma is used in  \cite{SiamiArXiv18} from the sparsification literature \citep{Christos} to find sparse actuator or sensor schedules. However, in the present work, we are interested in designing a {\em joint} sparse schedule for both sensor and actuator sets; for this, we need to modify a key lemma, known as the Dual Set Lemma in \cite{Christos} to approximate the time-$t$ Hankel singular values. 

Our main result in this section shows how we can handle two sparse subsets with nonidentical indices. 
We then use this result later to design a deterministic algorithm for a joint sparse S/A schedule. More specifically, we need to control the singular values of the product of two matrices which can be written as the symmetrized combination of the two matrices (see Section \ref{sec:weighted}). Each one of these matrices is a sparse sum of rank-$1$ matrices and they reflect controllability and observability properties of the chosen sparse S/A set.

Theorem \ref{th::1} and Algorithm \ref{al::dualsetNew} formalize the procedure of iteratively adding one vector at a time and forming two Gramian matrices.\\

\begin{theorem}
\label{th::1}
Let $V=\{v_1, \ldots, v_{t_1}\}$ and $U=\{ u_1, \ldots, u_{t_2}\}$ such that $\sum_{i=1}^{t_1} v_i v_i^\top = X$ and $\sum_{i=1}^{t_2} u_i u_i^\top = I_n$ where $v_i, u_i \in \R^n$ ($n < t_1, t_2$). Given integer numbers $\kappa_1$ and $\kappa_2$ with $n < \kappa_1 \leq  t_1$ and $n< \kappa_2 \leq t_2$, Algorithm \ref{al::dualsetNew} computes a set of weights $s_i \geq 0$ and $r_i \geq 0$, such that
\begin{eqnarray*}
&& \left( \sum_{i=1}^{t_1} s_i v_i v_i^\top \right)^{\frac{1}{2}} \left( \sum_{i=1}^{t_2} r_i u_i u_i^\top \right)  \left( \sum_{i=1}^{t_1} s_i v_i v_i^\top \right)^{\frac{1}{2}}   \\
&&~~~~~~~~~~~~~~\succeq~ {\rm e}^{-(\epsilon_1+\epsilon_2)} X, 
\end{eqnarray*}
\begin{eqnarray*}
&& \left( \sum_{i=1}^{t_1} s_i v_i v_i^\top \right)^{\frac{1}{2}} \left( \sum_{i=1}^{t_2} r_i u_i u_i^\top \right)  \left( \sum_{i=1}^{t_1} s_i v_i v_i^\top \right)^{\frac{1}{2}}   \\
&&~~~~~~~~~~~~~~\preceq~ {\rm e}^{\epsilon_1+\epsilon_2} X, 
 \end{eqnarray*}
\[ \card \left \{ s_i \neq 0 ~|~ i \in [t_1]\right \}~\leq~ \kappa_1, \]
and
\[ \card \left \{ r_i \neq 0 ~|~ i \in [t_2] \right \}~\leq~ \kappa_2,\]
 where
 \[\epsilon_1  ~:=~ 2 \tanh^{-1}\left (\sqrt{\frac{n}{\kappa_1}}\right),~\text{and}~\epsilon_2  ~:=~ 2 \tanh^{-1}\left (\sqrt{\frac{n}{\kappa_2}}\right).\]
\end{theorem}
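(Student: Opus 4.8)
The plan is to reduce the two-set claim to two \emph{independent} applications of a single-set spectral sparsification bound and then recombine them by a short operator-monotonicity argument. Write $\tilde P := \sum_{i=1}^{t_1} s_i v_i v_i^\top$ and $\tilde Q := \sum_{i=1}^{t_2} r_i u_i u_i^\top$ for the two reweighted sums produced by Algorithm~\ref{al::dualsetNew}, so that the matrix in the statement is $\tilde P^{1/2}\tilde Q\,\tilde P^{1/2}$.

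First I would isolate the underlying one-set fact that Algorithm~\ref{al::dualsetNew} implements, i.e.\ the modified Dual Set Lemma: given $z_1,\dots,z_m\in\R^n$ with $\sum_i z_i z_i^\top = Y\succ 0$ and a budget $\kappa\in(n,m]$, the barrier-potential greedy selection (adding one vector at a time while keeping the upper and lower barrier potentials bounded, followed by one global rescaling of the weights) returns $c_i\ge 0$, at most $\kappa$ of them nonzero, with
\[ {\rm e}^{-\epsilon}\,Y~\preceq~\sum_{i=1}^{m} c_i z_i z_i^\top~\preceq~{\rm e}^{\epsilon}\,Y,\qquad \epsilon = 2\tanh^{-1}\!\big(\sqrt{n/\kappa}\big); \]
the condition-number bound $\big(\tfrac{1+\sqrt{n/\kappa}}{1-\sqrt{n/\kappa}}\big)^2={\rm e}^{2\epsilon}$ of the Batson--Spielman--Srivastava-type argument becomes this symmetric sandwich once all weights are rescaled by the geometric mean of the two extreme eigenvalues. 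Applying it to $U$ (already $\sum_i u_i u_i^\top = I_n$) with budget $\kappa_2$ gives $r_i\ge 0$, at most $\kappa_2$ nonzero, with ${\rm e}^{-\epsilon_2} I_n\preceq \tilde Q\preceq {\rm e}^{\epsilon_2} I_n$. Applying it to the whitened family $\bar v_i := X^{-1/2} v_i$ (so $\sum_i\bar v_i\bar v_i^\top = I_n$) with budget $\kappa_1$ and conjugating the output sandwich by $X^{1/2}$ gives $s_i\ge 0$, at most $\kappa_1$ nonzero, with ${\rm e}^{-\epsilon_1} X\preceq \tilde P\preceq {\rm e}^{\epsilon_1} X$. (Here $X\succ0$, which holds under Assumption~\ref{assum-1} in our application; in general one restricts to $\Range(X)$.)

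It then remains to chain the two sandwiches. Since $A\preceq B$ implies $MAM^\top\preceq MBM^\top$ for any $M$, taking $M=\tilde P^{1/2}$ (symmetric and positive definite, as $\tilde P\succeq {\rm e}^{-\epsilon_1}X\succ0$) turns ${\rm e}^{-\epsilon_2} I_n\preceq\tilde Q\preceq {\rm e}^{\epsilon_2} I_n$ into
\[ {\rm e}^{-\epsilon_2}\,\tilde P~\preceq~\tilde P^{1/2}\tilde Q\,\tilde P^{1/2}~\preceq~{\rm e}^{\epsilon_2}\,\tilde P . \]
Substituting ${\rm e}^{-\epsilon_1}X\preceq\tilde P\preceq{\rm e}^{\epsilon_1}X$ into both ends yields ${\rm e}^{-(\epsilon_1+\epsilon_2)}X\preceq\tilde P^{1/2}\tilde Q\,\tilde P^{1/2}\preceq{\rm e}^{\epsilon_1+\epsilon_2}X$, and the two cardinality bounds are exactly the budgets $\kappa_1,\kappa_2$.

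I expect the only real work to be in the first step: checking that Algorithm~\ref{al::dualsetNew} maintains the barrier-potential invariants delivering the one-set sandwich with the sharp constant $\epsilon=2\tanh^{-1}(\sqrt{n/\kappa})$, and that its two selection loops terminate within the respective budgets. This is a by-now-standard deterministic analysis in the spirit of Batson--Spielman--Srivastava; the new ingredient here is simply the observation that, because the two weight vectors are unconstrained relative to each other, the joint problem decouples, so the Hankel-type quantity $\tilde P^{1/2}\tilde Q\,\tilde P^{1/2}$ can be controlled by the elementary conjugation/chaining step above.
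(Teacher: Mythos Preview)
Your proposal is correct and matches the paper's proof essentially step for step: the paper also whitens $V$ by $X^{-1/2}$, applies the one-set Dual Set Lemma separately to $\bar V$ and to $U$ to obtain ${\rm e}^{-\epsilon_1}X\preceq\tilde P\preceq{\rm e}^{\epsilon_1}X$ and ${\rm e}^{-\epsilon_2}I_n\preceq\tilde Q\preceq{\rm e}^{\epsilon_2}I_n$, and then conjugates the second sandwich by $\tilde P^{1/2}$ before substituting the first. Your remark that the symmetric $e^{\pm\epsilon}$ form comes from rescaling the BSS output by the geometric mean of the two barrier ratios is exactly the rescaling the paper performs (it passes from $\bar s$ to $s$ by dividing by $1-n/\kappa$), and your observation that the two loops decouple is the whole point of the construction.
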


Due to space limitations, we refer the interested readers to \citep{Christos} for more details on Algorithm \ref{al::dualsetNew}. However, roughly speaking, the algorithm is based on  choosing vectors in a greedy fashion that satisfy a set of desired properties at each step, leading to bounds on Hankel singular values.
We first define two barriers or potential functions as follows:
 \begin{equation}
  \underline{\phi} (\underline \mu , \underline{\mathcal A}) = \sum_{i=1}^n \frac{1}{\lambda_i(\underline{\mathcal A})-\underline \mu },
  \label{eq::400}
  \end{equation}
 and 
\begin{equation}
 \bar \phi (\bar \mu,\bar{\mathcal A}) ~=~ \sum_{i=1}^n \frac{1}{\bar \mu -\lambda_i(\bar{\mathcal A})}.
 \label{eq::401}
 \end{equation}
These potential functions quantify how far the eigenvalues of $\underline{\mathcal A}$ and $\bar{\mathcal A}$ are from the barriers $\underline \mu$ and $\bar \mu$. These potential functions blow up as any eigenvalue nears the barriers; moreover, they show the locations of all the eigenvalues concurrently.
We then define two parameters $\mathfrak L$ and $\mathfrak U$ as follows:
\begin{eqnarray*}
&&\hspace{-.1cm}\mathfrak L(v, \underline \delta,\underline{\mathcal A},\underline \mu)~= \nonumber \\
&&~~ \frac{ v^\top \left (\underline{\mathcal A}-(\underline \mu +\underline \delta)I_n \right )^{-2}v}{\underline \phi (\underline \mu+\underline \delta, \underline{\mathcal A}) - \underline \phi (\underline \mu, \underline{\mathcal A})}- v^\top \left (\underline{\mathcal A}-(\underline \mu +\underline \delta)I_n \right)^{-1}v,
\end{eqnarray*}
and
\begin{eqnarray*}
&&\hspace{-.1cm}\mathfrak U(u, \bar \delta,\bar{\mathcal A},\bar \mu ) = \nonumber \\
&&~~ \frac{ u^\top ((\bar \mu +\bar \delta)I_n - \bar{\mathcal A})^{-2}u}{\bar \phi (\bar \mu, \bar{\mathcal A}) - \bar \phi (\bar \mu+\bar \delta, \bar{\mathcal A})}+u^\top \left ((\bar \mu +\bar \delta)I_n - \bar{\mathcal A} \right )^{-1}u.
\end{eqnarray*}

The Sherman-Morrison-Woodbury formula inspires the structure of the above quantities for more details on the barrier method see \citep{Christos}. The potential functions \eqref{eq::400} and \eqref{eq::401} are chosen to guide the selection of vectors and scalings at each step $\tau$ and to ensure steady progress of the algorithm.
Small values of these potentials indicate that the eigenvalues of $\bar{\mathcal A}$ and $\underline{\mathcal A}$ do not gather near $\bar{\mu}$ and $\underline{\mu}$, respectively. 
At each iteration, we increase the upper barrier $\bar{\mu}$ by a fixed constant $\bar{\delta}$ and the lower barrier $\bar{\mu}$ by another fixed constant $\underline{\delta}$. It can be shown that as long as the potentials remain bounded, there must exist (at every step $\tau$) a choice of an index $j$ and weights $s_j$ and $r_j$ so that the addition of the associated rank-1 matrices to $\bar{\mathcal A}$ and $\underline{\mathcal A}$, and the increments of barriers do not increase either potential and keep all the eigenvalues of the updated matrix between the barriers (see Algorithm \ref{al::dualsetNew}). Repeating these steps ensures steady growth of all the eigenvalues and yields the desired result.}

This algorithm is tailored from an algorithm from \cite{Christos} (which is deterministic and requires at most $\mathcal O\left( (\kappa_1 t_1+\kappa_2 t_2) n^2\right)$) steps for joint sparse S/A selections. 
We view this algorithm as a subroutine acting on sets $U$ and $V$ as 
\[ s, r = \GenDualSet (V, U, \kappa_1, \kappa_2).\]

\begin{algorithm}
    \caption{A Modified Dual Set Spectral Sparsification $\GenDualSet(V, U, \kappa_1, \kappa_2)$.}
    \label{al::dualsetNew}
\begin{algorithmic}[1]
\Statex {\bf{Input}} $V=\left [v_1,\ldots, v_{t_1}\right] \in \R^{n \times t_1} $, {with} $VV^\top=X$,
    $U=\left [u_1,\ldots, u_{t_2}\right] \in \R^{n \times t_2}$, with $UU^\top=I_n$,
    $\kappa_1 \in \Z_+$, with $n<\kappa_1\leq t_1$,
    $\kappa_2 \in \Z_+$, with $n<\kappa_2\leq t_2$
\Statex  {\bf{Output}} $s=[s_1,s_2, \ldots, s_{t_1}] \in \R^{1 \times t_1}_+$ with $\|s\|_{0} \leq \kappa_1$,
    $r=[r_1,r_2, \ldots, r_{t_2}] \in \R^{1 \times t_2}_+$ with $\|r\|_{0} \leq \kappa_2$,
    
\State Set $s(0)=0_{t_1 \times 1}$, $\underline{\mathcal A}(0)=\bar{\mathcal A}(0)=0_{n \times n}$, $\underline \delta=1$, $\bar \delta=\frac{1+\sqrt{\frac{n}{\kappa_1}}}{1-\sqrt{\frac{n}{\kappa_1}}}$
\For{$\tau=0:\kappa_1-1$}        
\State $\underline \mu({\tau})= \tau - \sqrt{\kappa_1 n}$
\State $\bar \mu({\tau})=\bar \delta \left( \tau+\sqrt{\kappa_1 n}\right)$
\State Find an index $j$ such that
\[ \mathfrak U(X^{-\frac{1}{2}}v_j, \bar \delta, \bar{\mathcal A}(\tau), \bar \mu({\tau})) \leq \mathfrak L(X^{-\frac{1}{2}}v_j, \underline \delta, \underline{\mathcal A}({\tau}), \underline \mu({\tau}))\]
\State Set 
\[\scalebox{0.8}{$\Delta ={2}{\left(\mathfrak U(X^{-\frac{1}{2}} v_j, \bar \delta, \bar{\mathcal A}(\tau), \bar \mu({\tau})) + \mathfrak L(X^{-\frac{1}{2}} v_j, \underline \delta, \underline{\mathcal A}({\tau}), \underline \mu({\tau}))\right)^{-1}}$}\]
\State Update the $j$-th component of $s(\tau)$:~$s({\tau+1})=s(\tau)+ \Delta  \text{e}_j$,
\State $\underline{\mathcal A}({\tau+1}
)=\underline{\mathcal A}({\tau})+\Delta  X^{-\frac{1}{2}} v_jv_j^\top X^{-\frac{1}{2}}$
\State 
$\bar{\mathcal A}({\tau+1})=\bar{\mathcal A}({\tau})+ \Delta X^{-\frac{1}{2}} u_ju_j^\top X^{-\frac{1}{2}}$
\EndFor
\State Set $r(0)=0_{t_2 \times 1}$, $\underline{\mathcal A}(0)=\bar{\mathcal A}(0)=0_{n \times n}$, $\underline \delta=1$, $\bar \delta=\frac{1+\sqrt{\frac{n}{\kappa_2}}}{1-\sqrt{\frac{n}{\kappa_2}}}$
        \vspace{.1cm}
\For{$\tau=0:\kappa_2-1$}        
\State $\underline \mu({\tau})= \tau - \sqrt{\kappa_2 n}$
\State $\bar \mu({\tau})=\bar \delta \left( \tau+\sqrt{\kappa_2 n}\right)$
\State Find an index $j$ such that
\[ \mathfrak U(u_j, \bar \delta, \bar{\mathcal A}(\tau), \bar \mu({\tau})) \leq \mathfrak L(u_j, \underline \delta, \underline{\mathcal A}({\tau}), \underline \mu({\tau}))\]
\State Set 
\[\Delta ={2}{\left(\mathfrak U(u_j, \bar \delta, \bar{\mathcal A}(\tau), \bar \mu({\tau})) + \mathfrak L(u_j, \underline \delta, \underline{\mathcal A}({\tau}), \underline \mu({\tau}))\right)^{-1}}\]
\State Update the $j$-th component of $r(\tau)$:
\[r({\tau+1})=r(\tau)+ \Delta  \text{e}_j,\]
\State $\underline{\mathcal A}({\tau+1}
)=\underline{\mathcal A}({\tau})+\Delta  u_ju_j^\top$
\State $\bar{\mathcal A}({\tau+1})=\bar{\mathcal A}({\tau})+ \Delta u_ju_j^\top$
\EndFor

\State{\bf Return}{\[s=\kappa _1^{-1}\left(1- \sqrt{\frac{n}{\kappa_1}}\right) s(\kappa_1)\]
\[r=\kappa_2 ^{-1}\left(1- \sqrt{\frac{n}{\kappa_2}}\right) r(\kappa_2)\]}
\end{algorithmic}
\end{algorithm}

We now present the proof of Theorem \ref{th::1}.\\

\begin{proof}
{To prove this theorem we first use \citep[Lemma 1]{Christos}. We first define an isotropic set of $t_1$ vectors based on set $V=\{v_1, \ldots, v_{t_1}\}$ as follows
\begin{equation}
\bar V = \{ \bar v_i = X^{-\frac{1}{2}} v_i | ~v_i \in V\}.
\label{eq::464}
\end{equation}
Using \eqref{eq::464}, we have
\begin{equation}
\sum_{i=1}^{t_1} \bar v_i \bar v_i^\top = I_n.
\label{eq:468}
\end{equation}
Then, according to the Dual Set Lemma in \cite{Christos} and Line 1 to Line 10 of Algorithm \ref{al::dualsetNew} where $\bar s = \kappa _1^{-1}\left(1- \sqrt{\frac{n}{\kappa_1}}\right) s(\kappa_1)$, we get
\begin{equation}
\left( 1- \sqrt{\frac{n}{\kappa_1}}\right)^2 I_n ~\preceq~ \sum_{i=1}^{t_1} \bar s_i \bar v_i \bar v_i^\top   ~\preceq~ \left( 1+\sqrt{\frac{n}{\kappa_1}}\right)^2 I_n,
\label{eq:474}
\end{equation}
where $\card \{ s_i \neq 0 |~i \in [t_1] \} \leq \kappa_1$. This can be rewritten as follows
\begin{equation}
{\rm e}^{-\epsilon_1} I_n ~\preceq~ \sum_{i=1}^{t_1} s_i \bar v_i \bar v_i^\top   ~\preceq~ {\rm e}^{-\epsilon_1} I_n,
\label{eq:474A}
\end{equation}
where $\epsilon_1=2 \tanh^{-1}\left (\sqrt{\frac{n}{\kappa_1}}\right)$, and 
\[ s = \kappa _1^{-1}\left(1+ \sqrt{\frac{n}{\kappa_1}}\right)^{-1} s(\kappa_1),\]
where $s(.)$ is defined in Algorithm \ref{al::dualsetNew} and $s=[s_1,s_2, \cdots, s_{t_1}]$.
Next, based on \eqref{eq:468},  \eqref{eq:474}, and $\sum_{i=1}^{t_1} v_i v_i^\top = X$, we get 
\begin{equation}
{\rm e}^{-\epsilon_1}  X ~\preceq~ \sum_{i=1}^{t_1} s_i  v_i  v_i^\top   ~\preceq~ {\rm e}^{\epsilon_1}  X,
\label{eq:475}
\end{equation}
where \eqref{eq:475} is obtained by multiplying positive definite matrix $X^{\frac{1}{2}}$ from both sides of \eqref{eq:474}.
Similarly, according to the Dual Set Lemma in \cite{Christos} and Line 11 to Line 21 of Algorithm \ref{al::dualsetNew} where $\bar r = \kappa _2^{-1}\left(1- \sqrt{\frac{n}{\kappa_2}}\right) r(\kappa_2)$, we get
\begin{equation}
\left( 1- \sqrt{\frac{n}{\kappa_2}}\right)^2 I_n ~\preceq~ \sum_{i=1}^{t_2} \bar r_i u_i u_i^\top ~\preceq~\left( 1+ \sqrt{\frac{n}{\kappa_2}}\right)^2 I_n,
\label{eq::481}
\end{equation}
where  $\card \{ r_i \neq 0 |~i \in [t_2] \} \leq \kappa_2$. This can be rewritten as follows
\begin{equation}
{\rm e}^{-\epsilon_2} I_n ~\preceq~  \sum_{i=1}^{t_2} r_i u_i u_i^\top    ~\preceq~ {\rm e}^{-\epsilon_2} I_n,
\label{eq:474A}
\end{equation}
where $\epsilon_2=2 \tanh^{-1}\left (\sqrt{\frac{n}{\kappa_2}}\right)$, and 
\[ r = \kappa _2^{-1}\left(1+ \sqrt{\frac{n}{\kappa_2}}\right)^{-1} r(\kappa_2),\]
where $r(.)$ is defined in Algorithm \ref{al::dualsetNew} and $r=[r_1,r_2, \cdots, r_{t_2}]$.
Using \eqref{eq::481} and the fact that $\left( \sum_{i=1}^{t_1} s_i v_i v_i^\top \right)^{\frac{1}{2}}\succeq 0 $, it follows
\begin{eqnarray}
&&{\rm e}^{-\epsilon_2} \sum_{i=1}^{t_1} s_i v_i v_i^\top ~\preceq~~~~~~~~~~~~~~~~~~~~~~~~~~~~~~~\nonumber \\
 &&\left( \sum_{i=1}^{t_1} s_i v_i v_i^\top \right)^{\frac{1}{2}} \left( \sum_{i=1}^{t_2} r_i u_i u_i^\top \right)  \left( \sum_{i=1}^{t_1} s_i v_i v_i^\top \right)^{\frac{1}{2}} \nonumber \\  
 &&\preceq~{\rm e}^{\epsilon_2}  \sum_{i=1}^{t_1} s_i v_i v_i^\top.
\label{eq::589}
\end{eqnarray}
Finally combining \eqref{eq:475} and \eqref{eq::589}, we get the desired results.}
\end{proof}

In the next section, we show how various Hankel-based measures can be approximated by selecting a sparse set of actuators and sensors.

\section{Joint Sparse S/A Scheduling Problems}
\label{sec:sparse}
For given linear system \eqref{model-aa}-\eqref{model-a} with a general underlying structure, the joint S/A scheduling problem seeks to construct a schedule of the control inputs and sensor outputs that keeps the number of active actuators and sensors much less than the fully sensed/actuated system such that the Hankel-based performance matrices of the original and the new systems are similar in an appropriately defined sense.
Specifically, given a canonical linear, time-invariant system \eqref{model-aa}-\eqref{model-a} with $m$ actuators, $p$ sensors and Gramians $\mathcal P(t)$, $\mathcal Q(t)$ at time $t$, our goal is to find a joint sparse S/A schedule such that the resulting system with Hankel matrix $\mathcal H_s(t)$ is well-approximated, i.e., 
\begin{equation}
%\left| \log {\rho\left (\mathcal Q^{\frac{1}{2}}(t) \mathcal P(t) \mathcal Q^{\frac{1}{2}}(t)\right)- \log \rho\left (\mathcal Q_s^{\frac{1}{2}}(t) \mathcal P_s(t) \mathcal Q_s^{\frac{1}{2}}(t)\right )} \right| ~\le~ \epsilon,
\left| \log \frac{\rho\left (\mathcal Q^{\frac{1}{2}}(t) \mathcal P(t) \mathcal Q^{\frac{1}{2}}(t)\right) }{\rho\left (\mathcal Q_s^{\frac{1}{2}}(t) \mathcal P_s(t) \mathcal Q_s^{\frac{1}{2}}(t)\right )} \right| ~\le~ \epsilon,
\label{518}
\end{equation}
where $\rho$ is any systemic performance metric that quantifies the performance of the system for example as the $\mathcal L_2$-gain from past inputs to future outputs, and $\epsilon \geq 0 $ is the approximation factor. The systemic performance metrics are defined based on the Hankel singular values, and we will show that ``close" controllability and observability Gramian matrices result in approximately the same values.
Our goal here is to answer the following questions:
(1) What are the minimum numbers of actuators and sensors that need to be chosen to achieve a good approximation of the system where the full sets of actuators and sensors utilized? (2) What is the relation between the numbers of selected actuators and sensors and performance loss?
(3) Does a sparse approximation schedule exist with at most constant numbers of  active actuators and sensors at each time?
(4) What is the time complexity of choosing the subsets of actuators and sensors with guaranteed performance bounds?

In the rest of this paper, we show how some fairly recent advances in theoretical computer science can be utilized to answer these questions. Recently, Marcus, Spielman, and Srivastava introduced a new variant of the probabilistic method which ends up solving the so-called Kadison-Singer (KS) conjecture \citep{marcus}.  We use the solution approach to the KS conjecture together with  a combination of  tools from  Sections \ref{sec:sparse} and \ref{sec:weighted}  to find a sparse approximation of the sparse S/A scheduling problem with algorithms that have favorable time-complexity.

\section{A Weighted Joint Sparse S/A Schedule}
\label{sec:weighted}

As a starting point, we allow for scaling of the selected input and output signals while keeping the input and output scaling bounded. The input and output scalings allow for an extra degree of freedom that could allow for further sparsification of the sensor/actuator set. 
%We obtain the following problem formulation.
Given \eqref{model-aa}-\eqref{model-a}, we define a weighted, joint sensor and actuator schedule by 
\[ \sigma ~=~ \left( \{\sigma^{\text{(s)}}_k\}_{k=0}^{t-1}, \{\sigma^{\text{(a)}}_k\}_{k=0}^{t-1}\right),\]
where 
\[ \sigma^{\text{(s)}}_k= \left \{i |\,  \mathfrak s_i(k) >0, ~i \in [p] \right \} \subseteq [p],\]
\[ \sigma^{\text{(a)}}_k=\left \{i |\,  \mathfrak a_i(k) >0, ~i \in [m] \right\} \subseteq [m], \]
 and non-negative input and output scalings (i.e.,  $\mathfrak a_i(k) \geq 0$, $\mathfrak s_i(k) \geq 0$). 
The resulting system with this schedule is 
\begin{eqnarray}
 x(k+1) &=& A \, x(k) \,+\, \sum_{i \in \sigma^{\text{(a)}}_k} \mathfrak a_{i}(k)\, b_i \, u_i(k),\label{model:bb}\\
 y(k) &=&  \sum_{i \in \sigma^{\text{(s)}}_k} \mathfrak s_{i}(k)\, {\text e}_i\, c_i \, x(k),%\diag \left (r_1(k), \cdots, r_p(k)\right)\, C x(k),
 \label{model:b}
 \end{eqnarray}
where $b_i$'s are columns of matrix $B \in \R^{n \times m}$, $c_i$'s are rows of matrix $C \in \R^{p \times n}$, and $\text{e}_i$'s are the standard basis for $\R^p$;
scaling $\mathfrak a_{i}(k) \geq 0$ shows the strength of the $i$-th control input at time $k$; and similarly $\mathfrak s_{i}(k) \geq 0$ shows the strength of the $i$-th output at time $k$. Equivalently, the dynamics can be rewritten as 
\begin{eqnarray}
 && x(k+1) ~=~A \, x(k) ~+~ \mathcal B(k) \, u(k),
 \label{g:model-0}\\
 && y(k)~=~\mathcal C(k) \, x(k),
\label{g:model}
\end{eqnarray}
with time-varying input and output matrices 
\begin{equation*}
\mathcal B(k) ~= ~B \, \Lambda(k), 
\label{B-matrix}
\end{equation*} 
and 
\begin{equation*}
\mathcal C(k) ~=~ \Gamma (k) \, C,
\label{C-matrix}
\end{equation*}
 where $\Lambda(k)$ and $\Gamma(k)$ are diagonal, and their nonzero diagonal entries show selected actuators and sensors at time $k$, which means
\[ \Lambda(k)= \diag \left(\mathfrak a_1(k), \cdots, \mathfrak a_m(k) \right),\]
and
\[ \Gamma(k)= \diag \left(\mathfrak s_1(k), \cdots, \mathfrak s_p(k)\right). \]

The controllability Gramian and observability Gramian at time $t$ for this system can be rewritten as
\begin{equation}
\mathcal P_{s}(t) ~=~ \sum_{k=0}^{t-1} \sum_{j \in \sigma^{\text{(a)}}_k} \mathfrak a_j^2(k)\left (A^{t-k-1} b_j\right)\left (A^{t-k-1}b_j\right)^\top,
\label{W_s}
\end{equation}
and
\begin{equation}
\mathcal Q_{s}(t) ~=~ \sum_{k=0}^{t-1} \sum_{j \in \sigma^{\text{(s)}}_k} \mathfrak s_j^2(k)\left (c_j A^{t-k-1}\right)^\top\left (c_j A^{t-k-1}\right).
\label{Q_s}
\end{equation}

Our goal is to keep the numbers of active sensors and actuators on average less than $d_{\text{s}}$ and $d_{\text{a}}$, i.e., 
\begin{equation}
\frac{ \sum_{k=0}^{t-1}\card\left \{\sigma^{\text{(s)}}_k\right \}}{t} ~\leq~  d_{\text{s}},
% d~:=~\frac{ \card\left \{(i,k) : i \in [m], k+1 \in [t], s_i(k)  > 0\right \}}{t},
 \label{def-d-s}
 \end{equation}
 and
\begin{equation}
\frac{ \sum_{k=0}^{t-1}\card\left \{\sigma^{\text{(a)}}_k\right \}}{t} ~\leq~  d_{\text{a}},
% d~:=~\frac{ \card\left \{(i,k) : i \in [m], k+1 \in [t], s_i(k)  > 0\right \}}{t},
 \label{def-d-a}
\end{equation}
such that the Hankel matrix of the fully actuated/sensed system, is ``close" to the Hankel matrix of the new sparsely actuated/sensed system. Of course, this approximation will require horizon lengths that are potentially longer than the dimension of the state. 

\begin{assumption}
Throughout this paper, we assume the horizon length is fixed and is given by $t \geq n$. 
\end{assumption}

The definition below formalizes the meaning of approximation.

\begin{definition}[$(\epsilon,d_{\text{s}} )$-sensor schedule]
\label{def:obs}
We call system \eqref{model:bb}-\eqref{model:b} the $(\epsilon, d_{\text{s}} )$-sensor schedule for system \eqref{model-aa}-\eqref{model-a} if and only if
\begin{equation}
{\rm e}^{-\epsilon}\,\mathcal Q(t) ~\preceq ~ \mathcal Q_s(t) ~\preceq ~{\rm e}^{\epsilon}\,\mathcal Q(t),
\label{eq:321-obs}
\end{equation}
where $\mathcal Q(t)$ and $\mathcal Q_s(t)$ are the observability Gramian matrices of systems \eqref{model-aa}-\eqref{model-a} and \eqref{model:bb}-\eqref{model:b}, respectively. Parameter $d_{\text{s}}$ is defined by \eqref{def-d-s} as an upper bound on the average number of active sensors, and $\epsilon \in (0,1)$ is the approximation factor.
\end{definition}
 
Next we define the $(\epsilon,d_{\text{a}})$-actuator schedule for dynamical system \eqref{model-aa}-\eqref{model-a}.
\begin{definition}[$(\epsilon,d_{\text{a}})$-actuator schedule]
\label{def:con}
We call system \eqref{model:bb}-\eqref{model:b} the $(\epsilon, d_{\text{a}})$-actuator schedule of system \eqref{model-aa}-\eqref{model-a} if and only if
\begin{equation}
{\rm e}^{-\epsilon}\,\mathcal P(t) ~\preceq ~ \mathcal P_s(t) ~\preceq ~{\rm e}^{\epsilon}\,\mathcal P(t),
\label{eq:321-con}
\end{equation}
where $\mathcal P(t)$ and $\mathcal P_s(t)$ are the controllability Gramian matrices of \eqref{model-aa}-\eqref{model-a} and \eqref{model:bb}-\eqref{model:b}, respectively, and  parameter $d_{\text{a}}$ is defined by \eqref{def-d-a} as an upper bound on the average number of active actuators, and $\epsilon \in (0,1)$ is the approximation factor.
\end{definition}

\begin{remark}
 While it might appear that allowing for the choice of $\mathfrak s_i(k)$ and $\mathfrak a_i(k)$ might lead to amplification of output and intput signals,  we  note that the scaling cannot be too large because the approximations \eqref{eq:321-obs} and \eqref{eq:321-con} are two-sided. Specifically, by taking  the trace from both sides of \eqref{eq:321-obs} and \eqref{eq:321-con}, we can see that the weighted summations of  $\mathfrak s_i^2(k)$'s and $\mathfrak a_i^2(k)$'s are bounded. Moreover, based on Definitions \ref{def:obs} and \ref{def:con}, the ranks of matrices $\mathcal Q(t)$ and $\mathcal Q_s(t)$ are the same, similarly for matrices $\mathcal P(t)$ and $\mathcal P_s(t)$. Thus, the resulting $(\epsilon,d)$-approximation remains controllable and observable (recall that we assume that the original system is controllable and observable).
\end{remark}

We now define the joint sparse S/A schedule for system \eqref{model-aa}-\eqref{model-a} based on the Hankel singular values of the system.
\begin{definition}[$(\epsilon,d_{\text{s}},d_{\text{a}})$-joint S/A schedule]
\label{def:S/A}
We call system \eqref{model:bb}-\eqref{model:b} the $(\epsilon, d_{\text{s}},d_{\text{a}})$-joint S/A schedule of system \eqref{model-aa}-\eqref{model-a} if and only if
\begin{eqnarray*}
{\rm e}^{-\epsilon}\left( \mathcal Q(t)^{\frac{1}{2}} \mathcal P(t) \mathcal Q^{\frac{1}{2}}(t) \right) &\preceq& \mathcal Q_s^{\frac{1}{2}}(t) \mathcal P_s(t) \mathcal Q_s^{\frac{1}{2}}(t) \\
& \preceq& {\rm e}^{\epsilon} \left( \mathcal Q^{\frac{1}{2}}(t) \mathcal P(t) \mathcal Q^{\frac{1}{2}}(t) \right ),
\end{eqnarray*}
where $\mathcal P$, $\mathcal Q$, $\mathcal P_s$, and $\mathcal Q_s$ are the controllability and observability Gramians of \eqref{model-aa}-\eqref{model-a} and \eqref{model:bb}-\eqref{model:b}, respectively, and  parameters $d_{\text{s}}$ and $d_{\text{a}}$ are upper bounds on the average number of active sensors and actuators, and $\epsilon \in (0,1)$ is the approximation factor.\footnote{We should note that according to Definition \ref{def:S/A} if system \eqref{model:bb}-\eqref{model:b} is the $(\epsilon, d_{\text{s}},d_{\text{a}})$-joint S/A schedule, then it is also $(\epsilon_1, d_{\text{s}},d_{\text{a}})$-joint S/A schedule where $\epsilon_1 \geq \epsilon$.}
\end{definition}
The Hankel singular values can be computed from the reachability and observability Gramians. Note that $\mathcal Q^{\frac{1}{2}}(t) \mathcal P(t) \mathcal Q^{\frac{1}{2}}(t)$ and $\mathcal P(t) \mathcal Q(t)$ share the same eigenvalues. Therefore, the $k$-th largest Hankel singular values of system \eqref{g:model-0}-\eqref{g:model} are bounded from below and above by $\rm e^{\pm \epsilon}$ times the $k$-th largest Hankel singular values of system \eqref{g:model-0}-\eqref{g:model}.\\

\begin{remark}
The Hankel singular values reflect the joint controllability and observability of the balanced states. The Hankel singular values of the fully actuated and sensed system \eqref{model-aa}-\eqref{model-a} are well-approximated by  the Hankel singular values of  the joint S/A schedule.
\end{remark}

\subsubsection*{Construction Results} The next theorem constructs a solution for the sparse weighted S/A scheduling problem in polynomial time.

 \begin{algorithm}
 \caption{ A deterministic greedy-based algorithm to construct a sparse weighted S/A schedule (Theorem \ref{th-main-approx}).}
    \label{al-approx}
    \begin{algorithmic}[1]
\Statex {\bf{Input}} $A \in \R^{n \times n}$, $B \in \R^{n \times m}$, $C \in \R^{p \times n}$, $t$, $d_{\text{s}}$, and $d_{\text{a}}$
\Statex  {\bf{Output}} $\mathfrak s_i(k) \geq 0$ for $(i,k+1) \in [m] \times [t]$, $\mathfrak a_i(k) \geq 0$ for $(i,k+1) \in [p] \times [t]$
\vspace{.4cm}
\State $\mathcal R(t) := \left [ B~AB~A^2B~\cdots~A^{t-1}B \right ]$
\State  $\mathcal O^\top(t) := \left [ C^\top~A^\top C^\top~(A^2)^\top C^\top~\cdots~(A^{t-1})^\top C^\top\right ]^\top$
        \vspace{.1cm}
\State Set $V= \left(\mathcal R(t) \mathcal R^\top(t)\right)^{\frac{1}{2}}\mathcal O^\top(t)$
\State Set $U=\left(\mathcal R(t) \mathcal R^\top(t)\right)^{-\frac{1}{2}}\mathcal R(t)$
\State  Run $s, a = \GenDualSet (V, U, d_{\text{s}}t, d_{\text{a}}t)$
\State  Return $\mathfrak s_i(k):=\sqrt{r_{i+pk}}$ for $(i,k+1) \in [p] \times [t]$, $\mathfrak a_i(k):=\sqrt{s_{i+mk}}$ for $(i,k+1) \in [m] \times [t]$
        \vspace{.1cm}
\end{algorithmic}
\end{algorithm}

\vspace{.5cm}
\begin{theorem}
\label{th-main-approx}
Given the  time horizon $t \geq n$, model \eqref{model-aa}-\eqref{model-a}, $d_{\text a}> 1$ and $d_{\text s} > 1$,  
Algorithm \ref{al-approx} deterministically constructs a joint S/A schedule such that the resulting system \eqref{g:model-0}-\eqref{g:model} is a  $\left (\epsilon, d_{\text s}, d_{\text a} \right )$-approximation of \eqref{model-aa}-\eqref{model-a} with \[\epsilon~=~ \epsilon_{\text{s}} + \epsilon_{\text{a}},\]
 where
\[\epsilon_{\text{s}}  ~=~ 2 \tanh^{-1}\left (\sqrt{\frac{n}{d_{\text{s}}t}}\right),\]%\frac{2}{\sqrt {\frac{d_{\text{s}}t}{n}}+\sqrt\frac{n}{d_{\text{s}}t}}, \]
and
\[\epsilon_{\text{a}} ~=~ 2 \tanh^{-1}\left (\sqrt{\frac{n}{d_{\text{a}}t}}\right),\]%\frac{2}{\sqrt {\frac{d_{\text{a}}t}{n}} +\sqrt\frac{n}{d_{\text{a}}t}},\]
 in at most $\mathcal O\left ( (pd_{\text s}+md_{\text a})(tn)^2 \right)$ operations.
%is $\left (\frac{2}{\sqrt {\frac{dt}{n}} -1},d \right )$-approximation of \eqref{model:a}.
\end{theorem}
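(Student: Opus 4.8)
The plan is to obtain Theorem~\ref{th-main-approx} as, in essence, a single application of Theorem~\ref{th::1}, once one recognizes that the matrix families $V$ and $U$ built in Algorithm~\ref{al-approx} package the observability and controllability data of \eqref{model-aa}--\eqref{model-a} in exactly the normalized/weighted form that theorem expects.

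First I would unpack the construction. Write $\mathcal P:=\mathcal P(t)=\mathcal R(t)\mathcal R^\top(t)$ and $\mathcal Q:=\mathcal Q(t)=\mathcal O^\top(t)\mathcal O(t)$, both positive definite under Assumption~\ref{assum-1} with $t\ge n$. The columns of $U=\mathcal P^{-1/2}\mathcal R(t)$ are the vectors $\mathcal P^{-1/2}A^{t-1-k}b_i$, indexed by a pair (actuator $i$, time $k$) after the harmless relabeling $k\mapsto t-1-k$ of block order, so $UU^\top=\mathcal P^{-1/2}\mathcal P\mathcal P^{-1/2}=I_n$; the columns of $V=\mathcal P^{1/2}\mathcal O^\top(t)$ are $\mathcal P^{1/2}(c_iA^{t-1-k})^\top$, indexed by (sensor $i$, time $k$), so $X:=VV^\top=\mathcal P^{1/2}\mathcal Q\mathcal P^{1/2}\succ0$, a matrix cospectral with $\mathcal P\mathcal Q$ whose eigenvalues are the squared time-$t$ Hankel singular values of \eqref{model-aa}--\eqref{model-a}. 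Then I apply Theorem~\ref{th::1} with $t_1=pt$, $t_2=mt$, $\kappa_1=d_{\text s}t$, $\kappa_2=d_{\text a}t$; the hypotheses $n<\kappa_1\le t_1$ and $n<\kappa_2\le t_2$ hold since $t\ge n$ and $1<d_{\text s}\le p$, $1<d_{\text a}\le m$ (the cases $d_{\text s}\ge p$ or $d_{\text a}\ge m$ being trivial — select every sensor, resp.\ actuator). This produces weight vectors with at most $d_{\text s}t$ and $d_{\text a}t$ nonzero entries, the two-sided sandwich inequality, and $\epsilon_1=2\tanh^{-1}\sqrt{n/(d_{\text s}t)}=\epsilon_{\text s}$, $\epsilon_2=2\tanh^{-1}\sqrt{n/(d_{\text a}t)}=\epsilon_{\text a}$. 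Dividing the $\ell_0$-bounds by $t$ is exactly \eqref{def-d-s}--\eqref{def-d-a}.

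Next I would push the weights back through line~6 of Algorithm~\ref{al-approx}: setting $\mathfrak s_i(k)^2$ and $\mathfrak a_i(k)^2$ equal to the weights attached to the matching columns of $V$ and $U$, a short computation against \eqref{W_s}--\eqref{Q_s} (using the column identities above, and that zero weights kill the corresponding terms in those sums) gives $\sum_j s_j v_jv_j^\top=\mathcal P^{1/2}\mathcal Q_s(t)\mathcal P^{1/2}$ and $\sum_j r_j u_ju_j^\top=\mathcal P^{-1/2}\mathcal P_s(t)\mathcal P^{-1/2}$. Substituting into the conclusion of Theorem~\ref{th::1},
\[{\rm e}^{-\epsilon}\mathcal P^{\frac12}\mathcal Q\mathcal P^{\frac12}\ \preceq\ \bigl(\mathcal P^{\frac12}\mathcal Q_s\mathcal P^{\frac12}\bigr)^{\frac12}\bigl(\mathcal P^{-\frac12}\mathcal P_s\mathcal P^{-\frac12}\bigr)\bigl(\mathcal P^{\frac12}\mathcal Q_s\mathcal P^{\frac12}\bigr)^{\frac12}\ \preceq\ {\rm e}^{\epsilon}\mathcal P^{\frac12}\mathcal Q\mathcal P^{\frac12},\quad\epsilon:=\epsilon_{\text s}+\epsilon_{\text a}.\]
The middle matrix is cospectral with $\mathcal P^{1/2}\mathcal Q_s\mathcal P_s\mathcal P^{-1/2}$, hence with $\mathcal Q_s\mathcal P_s$, hence with $\mathcal Q_s^{1/2}\mathcal P_s\mathcal Q_s^{1/2}$; and $\mathcal P^{1/2}\mathcal Q\mathcal P^{1/2}$ is cospectral with $\mathcal Q^{1/2}\mathcal P\mathcal Q^{1/2}$. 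Reading off the ordered eigenvalues of the display therefore yields ${\rm e}^{-\epsilon}\sigma_i(\mathcal H(t))^2\le\sigma_i(\mathcal H_s(t))^2\le{\rm e}^{\epsilon}\sigma_i(\mathcal H(t))^2$ for all $i$, i.e.\ the $(\epsilon,d_{\text s},d_{\text a})$-joint S/A schedule property of Definition~\ref{def:S/A}. As a byproduct, feeding the intermediate inequalities \eqref{eq:475} and \eqref{eq::481} from the proof of Theorem~\ref{th::1} through the congruences $Z\mapsto\mathcal P^{-1/2}Z\mathcal P^{-1/2}$ and $Z\mapsto\mathcal P^{1/2}Z\mathcal P^{1/2}$ gives the separate Löwner bounds ${\rm e}^{-\epsilon_{\text s}}\mathcal Q\preceq\mathcal Q_s\preceq{\rm e}^{\epsilon_{\text s}}\mathcal Q$ and ${\rm e}^{-\epsilon_{\text a}}\mathcal P\preceq\mathcal P_s\preceq{\rm e}^{\epsilon_{\text a}}\mathcal P$, i.e.\ Definitions~\ref{def:obs}--\ref{def:con} hold simultaneously — the separation statement.

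For the running time, $\GenDualSet(V,U,d_{\text s}t,d_{\text a}t)$ runs in $\mathcal O\bigl((\kappa_1t_1+\kappa_2t_2)n^2\bigr)=\mathcal O\bigl((pd_{\text s}+md_{\text a})(tn)^2\bigr)$, and the preprocessing — forming $\mathcal R(t),\mathcal O(t),\mathcal P,\mathcal P^{\pm1/2},V,U$ — costs only $\mathcal O\bigl(tn^2(m+p)\bigr)+\mathcal O(n^3)$, absorbed because $t\ge n$ and $d_{\text s},d_{\text a}\ge1$. The step I expect to be the real work is the third paragraph: keeping the column-index bookkeeping straight (matching $i+pk$ and $i+mk$ to the (device, time) pairs under the block relabeling), checking that the $\mathcal P^{\pm1/2}$ factors cancel so that the weighted sums are \emph{exactly} the congruence-conjugated sparse Gramians, and being careful that what Theorem~\ref{th::1} delivers verbatim is the Löwner sandwich for the $\mathcal P^{1/2}\mathcal Q\mathcal P^{1/2}$-form, which is only cospectral with — not literally equal to — the $\mathcal Q^{1/2}\mathcal P\mathcal Q^{1/2}$-form written in Definition~\ref{def:S/A}; the identification then rests on the fact that all the Hankel-singular-value and systemic-metric conclusions depend only on this common spectrum.
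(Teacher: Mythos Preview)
Your proof is correct and follows essentially the same strategy as the paper's: package the observability/controllability data into the two vector families required by Theorem~\ref{th::1}, apply that theorem once, and then recognize the weighted rank-one sums as congruence-conjugated sparse Gramians. Your $\mathcal P^{\pm 1/2}$-normalization actually matches Algorithm~\ref{al-approx} more literally than the paper's own proof (which uses a dual $\mathcal Q$-based normalization with $X=\mathcal Q\mathcal P\mathcal Q$), and you are more careful than the paper in flagging that the resulting L\"owner sandwich is in a form cospectral with, rather than identical to, the one in Definition~\ref{def:S/A}---the paper's jump to \eqref{eq:963} glosses over exactly this point.
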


\begin{proof}
{The observability Gramian  of \eqref{model-aa}-\eqref{model-a} at time $t$ is given by
\begin{equation}
\mathcal Q(t)=\sum_{i=0}^{t-1} \sum_{j=1}^p (\underbrace{A^{i} c_j^\top}_{\bar u_{ij}})(A^{i} c_j^\top)^\top = \sum_{i=0}^{t-1} \sum_{j=1}^p \bar u_{ij} \bar u_{ij}^\top.
\label{eq:4600}
\end{equation}
By multiplying $\mathcal Q^{-\frac{1}{2}}(t)$ on both sides of \eqref{eq:4600}, it follows that 
\begin{eqnarray}
I_n&=&\sum_{i=0}^{t-1} \sum_{j=1}^p \underbrace{(\mathcal Q^{-\frac{1}{2}}(t)A^{i} c_j^\top)}_{u_{ij}} \, (\mathcal Q^{-\frac{1}{2}}(t) A^{i} c_j^\top )^\top \nonumber \\
 &=& \sum_{i=0}^{t-1} \sum_{j=1}^p  u_{ij}  u_{ij}^\top. 
\label{sum::identity}
\end{eqnarray}
In addition, the controllability Gramian  of \eqref{model-aa}-\eqref{model-a} at time $t$ is given by
\begin{equation}
\mathcal P(t)=\sum_{i=0}^{t-1} \sum_{j=1}^m (\underbrace{A^{i} b_j}_{\bar v_{ij}})(A^{i} b_j)^\top = \sum_{i=0}^{t-1} \sum_{j=1}^m \bar v_{ij} \bar v_{ij}^\top.
\label{eq:465}
\end{equation}
%using the whitening trick, it follows that
By multiplying $\mathcal Q(t)$ on both sides of \eqref{eq:465}, it follows that 
\begin{eqnarray}
\mathcal Q(t) \mathcal P(t) \mathcal Q(t)&=&\sum_{i=0}^{t-1} \sum_{j=1}^m \underbrace{(\mathcal Q(t)A^{i} b_j)}_{v_{ij}} \, (\mathcal Q(t) A^{i} b_j )^\top\nonumber \\ &=& \sum_{i=0}^{t-1} \sum_{j=1}^m  v_{ij}  v_{ij}^\top. 
\label{sum::identity}
\end{eqnarray}
%{Similar to the proof of Corollary \ref{th-main}, we get equality \eqref{sum::identity}.}
Next, we define $\kappa_1:= d_{\text{a}}t$, $\kappa_2 := d_{\text{s}} t$,
\[V :=\{ v_{ij} | i+1 \in [t], j \in [m]\},\]
and 
\[U := \{ u_{ij} | i+1 \in [t], j \in [p]\}.\] 
 We now apply Theorem \ref{th::1}, which shows that there exist scalars $s_{ij} \geq 0$ and $r_{ij} \geq 0$ such that
\begin{eqnarray}
&& \hspace{-.4cm} \scalebox{.8}{$\left( \sum_{i=0}^{t-1}\sum_{j=1}^{m} s_{ij} v_{ij} v_{ij}^\top \right)^{\frac{1}{2}} \left( \sum_{i=0}^{t-1}\sum_{j=1}^{p} r_{ij} u_{ij} u_{ij}^\top \right)  \left( \sum_{i=0}^{t-1}\sum_{j=1}^{m} s_{ij} v_{ij} v_{ij}^\top \right)^{\frac{1}{2}}$} \nonumber \\
&&~~~~~~~~~~~~~~\succeq~ {\rm e}^{-(\epsilon_{\text a}+\epsilon_{\text s})} \mathcal Q(t) \mathcal P(t) \mathcal Q(t), 
\label{eq::937}
\end{eqnarray}
\begin{eqnarray}
&& \hspace{-.4cm}  \scalebox{.8}{$\left( \sum_{i=0}^{t-1}\sum_{j=1}^{m} s_{ij} v_{ij} v_{ij}^\top \right)^{\frac{1}{2}} \left( \sum_{i=0}^{t-1}\sum_{i=1}^{p} r_{ij} u_{ij} u_{ij}^\top \right)  \left( \sum_{i=0}^{t-1}\sum_{j=1}^{m} s_{ij} v_{ij} v_{ij}^\top \right)^{\frac{1}{2}}$} \nonumber   \\
&&~~~~~~~~~~~~~~\preceq~ {\rm e}^{\epsilon_{\text a}+\epsilon_{\text s}} \mathcal Q(t) \mathcal P(t) \mathcal Q(t), 
\label{eq::943}
 \end{eqnarray}
\[ \card \left \{ s_{ij} \neq 0 ~|~ i+1 \in [t], j \in [m]\right \}~\leq~ \kappa_1, \]
and
\[ \card \left \{ r_{ij} \neq 0 ~|~ i+1 \in [t], j \in [p] \right \}~\leq~ \kappa_2.\]
where
\[\epsilon_{\text{s}}  ~=~ 2 \tanh^{-1}\left (\sqrt{\frac{n}{d_{\text{s}}t}}\right),\]
and
\[\epsilon_{\text{a}} ~=~ 2 \tanh^{-1}\left (\sqrt{\frac{n}{d_{\text{a}}t}}\right).\]
Using \eqref{W_s}, \eqref{Q_s}, \eqref{eq::937}, and \eqref{eq::943}, we get
{\begin{eqnarray}
\hspace{-.3cm}{\rm e}^{-\epsilon}\left( \mathcal Q(t)^{\frac{1}{2}} \mathcal P(t) \mathcal Q^{\frac{1}{2}}(t) \right) &\preceq& \mathcal Q_s^{\frac{1}{2}}(t) \mathcal P_s(t) \mathcal Q_s^{\frac{1}{2}}(t) \nonumber \\
& \preceq& {\rm e}^{\epsilon} \left( \mathcal Q^{\frac{1}{2}}(t) \mathcal P(t) \mathcal Q^{\frac{1}{2}}(t) \right),
\label{eq:963}
\end{eqnarray}}
where $\epsilon~:=~\epsilon_{\text{s}}+\epsilon_{\text{a}}$, and 
\begin{equation}
       \mathfrak s_j(t-i-1):=\sqrt{r_{ij}}, ~\text{and}~ \mathfrak a_j(t-i-1):=\sqrt{s_{ij}}.
\end{equation}
Finally, using \eqref{eq:963}, and Definition \ref{def:S/A}, we obtain the desired result. Moreover, this algorithm  runs in $d_{\text s} t+d_{\text a}t$ iterations; For the first loop of Algorithm \ref{al::dualsetNew}, in each iteration,  the functions $\mathfrak U$ and $\mathfrak L$ are evaluated at most $mt$ times. All $mt$ evaluations for both functions need at most $\mathcal O(n^3+mtn^2)$ time, because for all of them the matrix inversions can be calculated once. Finally, the updating step needs an additional $\mathcal O(n^2)$ time. Similarly for the second loop of Algorithm \ref{al::dualsetNew}, in each iteration,  the functions $\mathfrak U$ and $\mathfrak L$ are evaluated at most $pt$ times. All $pt$ evaluations for both functions need at most $\mathcal O(n^3+ptn^2)$ time, because for all of them the matrix inversions can be calculated once. Finally, the updating step needs an additional $\mathcal O(n^2)$ time.
 Overall, the complexity of the algorithm is of the order  $\mathcal O\left ( (pd_{\text s}+md_{\text a})(tn)^2 \right)$. }
\end{proof}

\vspace{.1cm}

The results presented in this paper also work for the case of linear time-varying systems. Because we can easily form the Gramian matrices for linear time-varying discrete-time systems, and then we can apply the result presented in Theorem \ref{th-main-approx} to obtain sparse actuator/sensor schedules.
  %and it is straightforward to extend them for affine nonlinear discrete-time systems as well.
 
\subsubsection*{ Tradeoffs}
Theorem \ref{th-main-approx} illustrates a tradeoff between the average numbers of active actuators and sensors (\ie, $d_{\text a}$ and $d_{\text s}$) and the time horizon $t$ (also known as the time-to-control). 
This implies that the reduction in the  average numbers of active actuators and sensors comes at the expense of  increasing time horizon $t$ in order to get the same approximation factor $\epsilon$. 
Moreover, the approximation becomes more accurate as  $t$, $d_{\text s}$, $d_{\text a}$ are increased. Indeed, increasing $d_{\text s}$ and $d_{\text a}$ will require more active sensors and actuators, and larger $t$ requires a larger observation and control time window.

\subsubsection*{The separation principle}

We next state our main theorem that shows that to design a joint sparse S/A schedule, it is enough to design two separate schedules: the sparse sensor schedule and the sparse actuator schedule. Thus, the joint S/A scheduling problem can be broken into two decoupled parts, that facilitates the design process.
\begin{theorem}
\label{theorem3}
To obtain $(\epsilon,d_{\text{s}}, d_{\text{a}})$-joint S/A schedule of linear system \eqref{g:model-0}-\eqref{g:model}, it is sufficient to first design an $(\epsilon
_1,d_{\text{s}})
$-sensor schedule and then  design an $(\epsilon-\epsilon_1,d_{\text{a}})$-actuator schedule with $\epsilon_1 < \epsilon$.
\end{theorem}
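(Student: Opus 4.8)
The plan is to chain together the two one-sided Gramian approximations multiplicatively, exactly mirroring the argument used in the proof of Theorem~\ref{th::1}. Suppose we have an $(\epsilon_1, d_{\text{s}})$-sensor schedule, so that by Definition~\ref{def:obs} the observability Gramian satisfies ${\rm e}^{-\epsilon_1}\mathcal Q(t) \preceq \mathcal Q_s(t) \preceq {\rm e}^{\epsilon_1}\mathcal Q(t)$, and independently an $(\epsilon-\epsilon_1, d_{\text{a}})$-actuator schedule, so that by Definition~\ref{def:con} the controllability Gramian satisfies ${\rm e}^{-(\epsilon-\epsilon_1)}\mathcal P(t) \preceq \mathcal P_s(t) \preceq {\rm e}^{(\epsilon-\epsilon_1)}\mathcal P(t)$. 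The target is the sandwich in Definition~\ref{def:S/A} for the combined schedule with approximation factor $\epsilon$.

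First I would establish congruence-invariance of the operator ordering: for any positive semidefinite $M$ and any matrix $S$, if $N_1 \preceq N_2$ then $S^\top N_1 S \preceq S^\top N_2 S$. Applying this with $S = \mathcal P_s^{1/2}(t)$ to the actuator bound is not quite what we want; instead I would proceed in two stages. Stage one: use the sensor bound ${\rm e}^{-\epsilon_1}\mathcal Q(t) \preceq \mathcal Q_s(t) \preceq {\rm e}^{\epsilon_1}\mathcal Q(t)$ and conjugate by $\mathcal P_s^{1/2}(t)$ to get ${\rm e}^{-\epsilon_1}\mathcal P_s^{1/2}(t)\mathcal Q(t)\mathcal P_s^{1/2}(t) \preceq \mathcal P_s^{1/2}(t)\mathcal Q_s(t)\mathcal P_s^{1/2}(t) \preceq {\rm e}^{\epsilon_1}\mathcal P_s^{1/2}(t)\mathcal Q(t)\mathcal P_s^{1/2}(t)$. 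Stage two: since $\mathcal P_s^{1/2}(t)\mathcal Q_s(t)\mathcal P_s^{1/2}(t)$ and $\mathcal Q_s^{1/2}(t)\mathcal P_s(t)\mathcal Q_s^{1/2}(t)$ share the same eigenvalues (both equal the Hankel singular values squared of the scheduled system, as noted after Definition~\ref{def:S/A}), I need to be a little careful: the operator inequality is not preserved under replacing a matrix by a similar one. The cleaner route is to conjugate the actuator bound ${\rm e}^{-(\epsilon-\epsilon_1)}\mathcal P(t) \preceq \mathcal P_s(t) \preceq {\rm e}^{(\epsilon-\epsilon_1)}\mathcal P(t)$ by $\mathcal Q^{1/2}(t)$ to get ${\rm e}^{-(\epsilon-\epsilon_1)}\mathcal Q^{1/2}(t)\mathcal P(t)\mathcal Q^{1/2}(t) \preceq \mathcal Q^{1/2}(t)\mathcal P_s(t)\mathcal Q^{1/2}(t) \preceq {\rm e}^{(\epsilon-\epsilon_1)}\mathcal Q^{1/2}(t)\mathcal P(t)\mathcal Q^{1/2}(t)$, then separately argue that $\mathcal Q_s^{1/2}(t)\mathcal P_s(t)\mathcal Q_s^{1/2}(t)$ is sandwiched by ${\rm e}^{\pm\epsilon_1}\mathcal Q^{1/2}(t)\mathcal P_s(t)\mathcal Q^{1/2}(t)$ — which follows by writing $\mathcal Q_s^{1/2}(t)\mathcal P_s(t)\mathcal Q_s^{1/2}(t)$ as a congruence of $\mathcal P_s(t)$ and comparing via the sensor bound after taking square roots (using that $A \preceq B$ for PSD matrices implies $A^{1/2} \preceq B^{1/2}$ is false in general, so instead I'd use $\mathcal Q_s(t) \preceq {\rm e}^{\epsilon_1}\mathcal Q(t)$ directly in the congruence $X \mapsto \mathcal P_s^{1/2}(t) X \mathcal P_s^{1/2}(t)$ and then transfer eigenvalue bounds). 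Composing the two sandwiches and adding exponents gives ${\rm e}^{-\epsilon}\mathcal Q^{1/2}(t)\mathcal P(t)\mathcal Q^{1/2}(t) \preceq \mathcal Q_s^{1/2}(t)\mathcal P_s(t)\mathcal Q_s^{1/2}(t) \preceq {\rm e}^{\epsilon}\mathcal Q^{1/2}(t)\mathcal P(t)\mathcal Q^{1/2}(t)$ with $\epsilon = \epsilon_1 + (\epsilon-\epsilon_1)$, which is precisely the condition of Definition~\ref{def:S/A}. The average-sparsity constraints $d_{\text{s}}$ and $d_{\text{a}}$ are inherited directly from the two constituent schedules, since the sensor schedule only touches $\{\sigma_k^{\text{(s)}}\}$ and the actuator schedule only $\{\sigma_k^{\text{(a)}}\}$.

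The main obstacle is the bookkeeping around operator inequalities versus eigenvalue (similarity) statements: the quantities $\mathcal Q_s^{1/2}\mathcal P_s\mathcal Q_s^{1/2}$ and $\mathcal P_s^{1/2}\mathcal Q_s\mathcal P_s^{1/2}$ are cospectral but not equal, so one cannot naively substitute one for the other inside a $\preceq$ chain. The resolution — which is exactly the device in the proof of Theorem~\ref{th::1}, eqn.~\eqref{eq::589} — is to keep one Gramian fixed while perturbing the other, conjugating by its square root, so that every comparison is a genuine congruence transformation ($N_1 \preceq N_2 \implies S N_1 S^\top \preceq S N_2 S^\top$) and the scalar factors ${\rm e}^{\pm\epsilon_i}$ simply pull through. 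Once that is set up carefully, the proof is a two-line composition.
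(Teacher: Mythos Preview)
Your plan has a genuine gap at exactly the point you flag as ``the main obstacle.'' The step you need,
\[
{\rm e}^{-\epsilon_1}\,\mathcal Q^{1/2}(t)\,\mathcal P_s(t)\,\mathcal Q^{1/2}(t)\ \preceq\ \mathcal Q_s^{1/2}(t)\,\mathcal P_s(t)\,\mathcal Q_s^{1/2}(t)\ \preceq\ {\rm e}^{\epsilon_1}\,\mathcal Q^{1/2}(t)\,\mathcal P_s(t)\,\mathcal Q^{1/2}(t),
\]
does \emph{not} follow from ${\rm e}^{-\epsilon_1}\mathcal Q\preceq\mathcal Q_s\preceq {\rm e}^{\epsilon_1}\mathcal Q$ in the Loewner order. (As an aside, your parenthetical is backwards: the square root \emph{is} operator monotone, so $A\preceq B$ does imply $A^{1/2}\preceq B^{1/2}$; the trouble is that $A^{1/2}XA^{1/2}\preceq B^{1/2}XB^{1/2}$ still fails for general PSD $X$.) A concrete $2\times 2$ obstruction: take $\mathcal Q=I$, $\mathcal Q_s=\diag(e^{\epsilon_1},e^{-\epsilon_1})$, and $\mathcal P_s=\bigl(\begin{smallmatrix}1&1\\1&1\end{smallmatrix}\bigr)$; then $e^{\epsilon_1}\mathcal Q^{1/2}\mathcal P_s\mathcal Q^{1/2}-\mathcal Q_s^{1/2}\mathcal P_s\mathcal Q_s^{1/2}$ has negative determinant, so the upper inequality fails. ``Transferring eigenvalue bounds'' would give only spectral inequalities, whereas Definition~\ref{def:S/A} asks for the operator ordering.

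The paper sidesteps this by a change of state coordinates $T=\mathcal P^{1/2}(t)$, which forces $\tilde{\mathcal P}(t)=I_n$. Then the actuator bound reads ${\rm e}^{-(\epsilon-\epsilon_1)}I_n\preceq\tilde{\mathcal P}_s\preceq {\rm e}^{\epsilon-\epsilon_1}I_n$, and conjugating this by $\tilde{\mathcal Q}_s^{1/2}$ yields a comparison of $\tilde{\mathcal Q}_s^{1/2}\tilde{\mathcal P}_s\tilde{\mathcal Q}_s^{1/2}$ with $\tilde{\mathcal Q}_s$ itself (not with $\tilde{\mathcal Q}$), which chains cleanly with the sensor bound ${\rm e}^{-\epsilon_1}\tilde{\mathcal Q}\preceq\tilde{\mathcal Q}_s\preceq {\rm e}^{\epsilon_1}\tilde{\mathcal Q}$. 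The point is that once one Gramian is the identity, the problematic ``square-root of a congruence'' never arises. This realization trick --- normalize one Gramian to $I$ before chaining --- is the missing ingredient in your argument; without it, your two stages cannot be glued into a single $\preceq$-chain.
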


\begin{proof}
First, we consider the following state-space realization of system \eqref{model-aa}-\eqref{model-a}, where $\tilde x = T x$ and it follows that
\[ x(t) ~~\rightarrow~~ \tilde x(t) ~=~ T x(t), \]
\[ \mathcal P(t) ~~\rightarrow ~~ \tilde {\mathcal P}(t) ~=~ T^{-1} \mathcal{P}(t) T^{-\top},\]
and $\mathcal {Q}(t) ~~\rightarrow~~ \tilde{\mathcal Q}(t) ~=~ T^{\top} \mathcal{Q}(t) T$. 
Let us assume that $T := \mathcal P^{\frac{1}{2}}(t)$, then the new controllability Gramian matrix is given by $\tilde{\mathcal P}(t) = I_n$. Using the new change of coordinates and our result from \cite{SiamiArXiv18}, we get
\begin{equation}
{\rm e}^{-\epsilon_1}\, I_n ~\preceq~ \tilde{\mathcal P}_s(t)  ~\preceq~ {\rm e}^{\epsilon_1}\,I_n, 
\label{eq:818}
\end{equation}
where $\tilde {\mathcal P}_s(t)$ is the controllability Gramian of the $(\epsilon_1, d_{\text{a}})$- actuator schedule. By multiplying $\tilde{\mathcal Q}^{\frac{1}{2}}_s(t)$ to \eqref{eq:818} from both sides, we have
\begin{equation}
{\rm e}^{-\epsilon_1}\, \tilde{\mathcal Q_s}(t) ~\preceq~ \tilde{\mathcal Q_s}^{\frac{1}{2}}(t) \tilde{\mathcal P}_s(t)  \mathcal Q_s^{\frac{1}{2}}(t)  ~\preceq~ {\rm e}^{\epsilon_1}\,  \tilde{\mathcal Q}_s(t).
\label{eq:826}\end{equation}
Similarly, based on Definition \ref{def:obs} for observability Gramian and \cite[Thm. 1]{SiamiArXiv18} for the dual notion of sensor scheduling, we get
\begin{equation}
{\rm e}^{-\epsilon_2}\,  \tilde{\mathcal Q}(t) ~\preceq~ \tilde{\mathcal Q}_s(t)  ~\preceq~ {\rm e}^{\epsilon_2}\,  \tilde{\mathcal Q}(t), 
\label{eq:830}
\end{equation}
where $\epsilon_2 = \epsilon - \epsilon_1$.
By combining \eqref{eq:826} and \eqref{eq:830}, it follows that
{\begin{equation}
{\rm e}^{-(\epsilon_1+\epsilon_2)}\,  \tilde{\mathcal Q}(t) ~\preceq~ \tilde{\mathcal Q}_s^{\frac{1}{2}}(t) \tilde{\mathcal P}_s(t)  \tilde{\mathcal Q}_s^{\frac{1}{2}}(t)  ~\preceq~ {\rm e}^{\epsilon_1+\epsilon_2}\,  \tilde{\mathcal Q}(t), \end{equation}}
and finally, we have
{\begin{eqnarray}
{\rm e}^{-\epsilon}\left( \mathcal Q^{\frac{1}{2}}(t) \mathcal P(t) \mathcal Q^{\frac{1}{2}}(t) \right) &\preceq& \mathcal Q_s^{\frac{1}{2}}(t) \mathcal P_s(t) \mathcal Q_s^{\frac{1}{2}}(t)\nonumber \\
&\preceq& {\rm e}^{\epsilon}\left( \mathcal Q^{\frac{1}{2}}(t) \mathcal P(t) \mathcal Q^{\frac{1}{2}}(t)\right).
\label{eq:842}
\end{eqnarray}}
Using \eqref{eq:842} and Definition \ref{def:S/A}, we obtain the desired result.
\end{proof}

\begin{figure}
	\centering       
	\includegraphics[trim = 0.5 0.5 0.5 0.5, clip,width=.45 \textwidth]{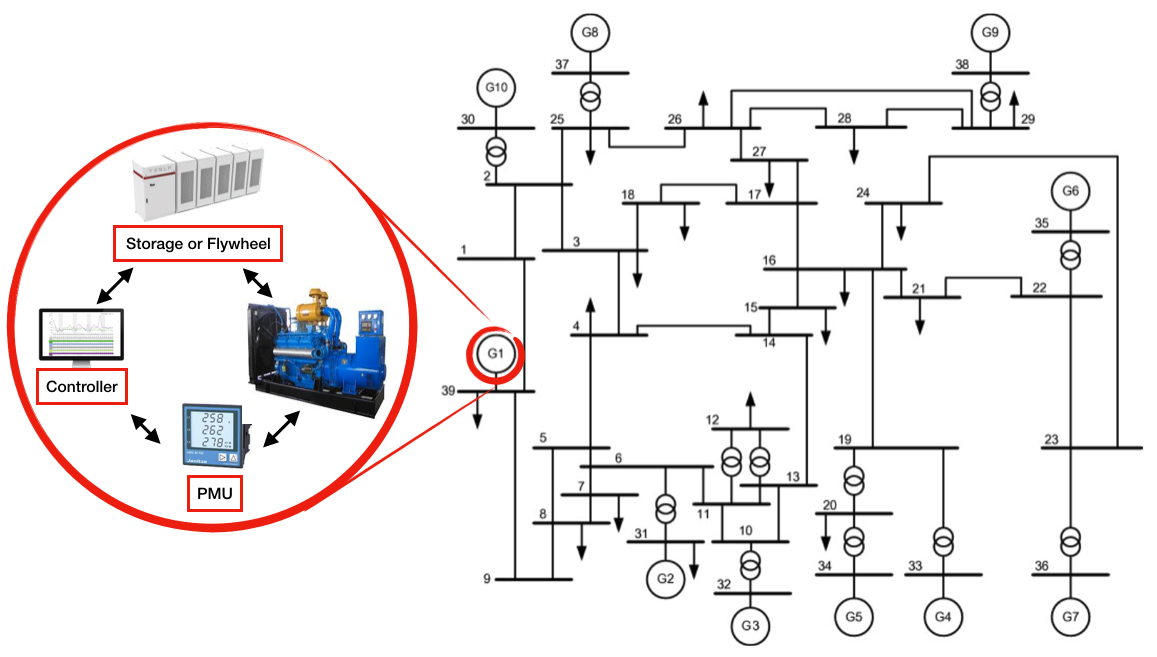} 
	%actual numbers {0: 4.0, 1: 4.0, 2: 7.0, 3: 3.0, 4: 8.0, 5: 10.0, 6: 4.0, 7: 9.0, 8: 3.0, 9: 10.0} $d=4$
	\caption{IEEE 10-generator 39-bus power system network (figure is adapted from \cite{atawi2013advance}). In Example \ref{example_1}, we assume that each generator has both frequency and angle sensors (e.g., PMUs) and also actuator (e.g., storage or flyweel). 	 }
	\label{fig:IEEE-39}
\end{figure}

\section{Numerical Examples}
In this section, we consider a numerical example to demonstrate our results. 
\vspace{.2cm}

\begin{example}[Power Network]\label{example_1}

\begin{figure*}
	\centering
	\begin{subfigure}[b]{0.33\textwidth}
		\centering
	\includegraphics[width=1.01\textwidth]{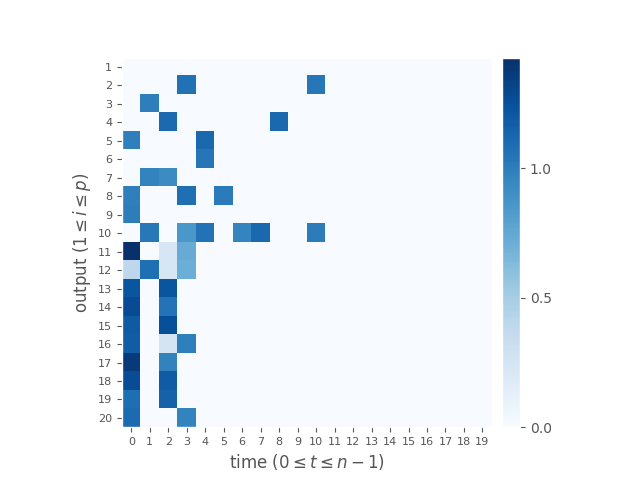} 
	\caption{ }
\end{subfigure}
	\begin{subfigure}[b]{0.33\textwidth}
	\centering
	\includegraphics[width=1.01 \textwidth]{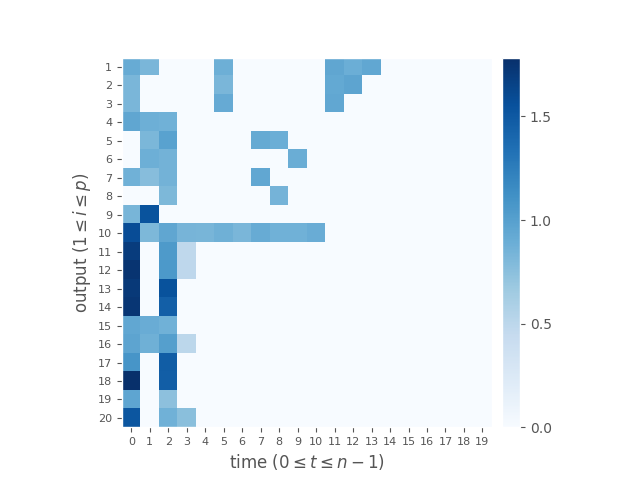} 
	\caption{ }
	\end{subfigure}
	\begin{subfigure}[b]{0.33\textwidth}
	\centering
	\includegraphics[width=1.01 \textwidth]{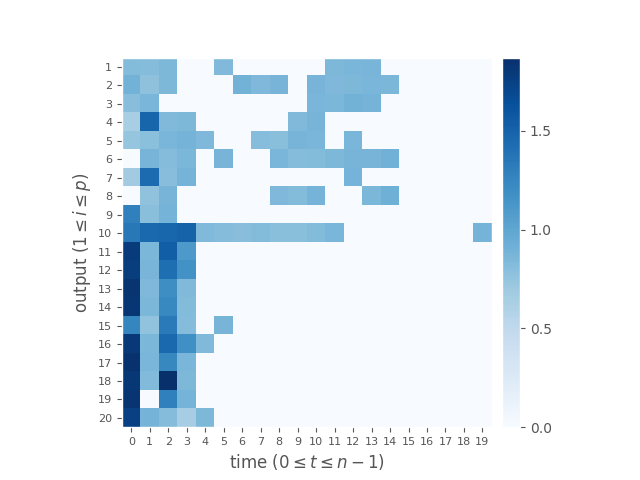} 
	\caption{}  
	\end{subfigure} % 
		 \caption{Subplots (a)-(c) present three weighted sparse sensor schedules for Example \ref{example_1} based on the proposed deterministic method (Algorithm \ref{al-approx}) where $d_s \in \{2.2, 3.4, 6.05\}$ is the average number of active sensors, respectively. The color of element $(i,k)$ is proportional to the scaling factor $\mathfrak s^2_i(k)$ where $i \in [20]$ and $k+1 \in [20]$.}\label{fig:1}
\end{figure*}

\begin{figure*}
	\centering
	\begin{subfigure}[b]{0.33\textwidth}
		\centering
	\includegraphics[width=1.01\textwidth]{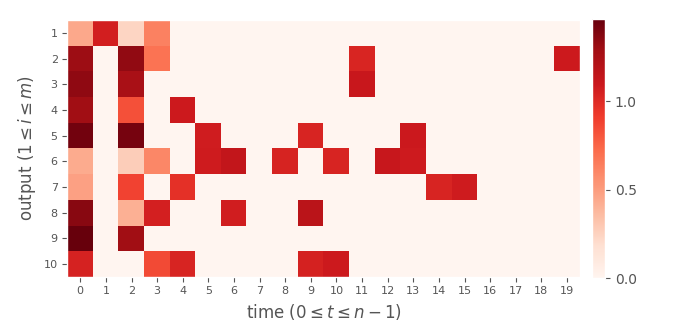} 
	\caption{ }
\end{subfigure}
	\begin{subfigure}[b]{0.33\textwidth}
	\centering
	\includegraphics[width=1.01 \textwidth]{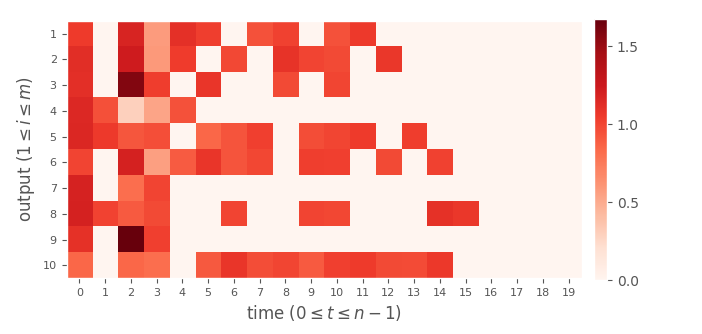} 
	\caption{ }
	\end{subfigure}
		\begin{subfigure}[b]{0.33\textwidth}
	\centering
	\includegraphics[width=.995 \textwidth]{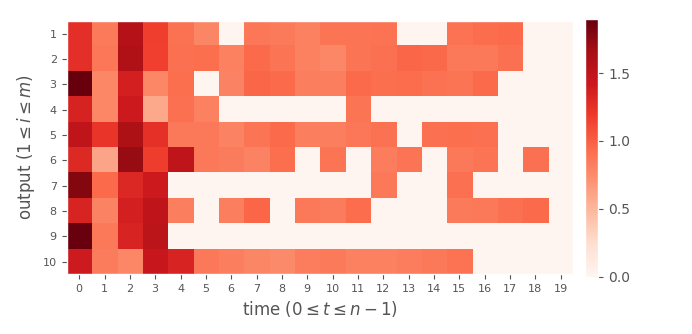} 
	\caption{ }
	\end{subfigure} % \{ 3,5,9\}
	 \caption{Subplots (a)-(c) present three weighted sparse actuator schedules for Example \ref{example_1} based on the proposed deterministic method (Algorithm \ref{al-approx}) where $d_a \in \{2.3, 3.95, 6.35\}$ is the average number of active actuators, respectively. The color of element $(i,k)$ is proportional to the scaling factor $\mathfrak a^2_i(k)$ where $i \in [20]$ and $k+1 \in [20]$.}\label{fig:2}
\end{figure*}

The problem is to select a set of sensors and actuators to be involved in the wide-area damping control of power systems. We apply our sparse scheduling approach on the IEEE 39-bus test system (a.k.a. the 10-machine New England Power System; see Fig. \ref{fig:IEEE-39}) \citep{atawi2013advance,liu2017minimal}. The single line diagram presented in this figure comprises generators ($G_i$ where $i \in [10]$), loads (arrows), transformers (double circles), buses (bold line segments with number $i \in [39]$), and lines between buses (see \cite{atawi2013advance,liu2017minimal}).

The goal of the wide-area damping control is to damp the fluctuations between generators and synchronize all generators.  The voltage at each generator is adjusted by the control inputs to regulate the power output.

We start with a model representing the interconnection between subsystems. Consider the  continuous-time swing dynamics
	\[ m_i \ddot{\theta_i} + d_i \dot{\theta_i} ~=~- \sum_{j \sim i} k_{ij}(\theta_i - \theta_j)+u_i,\]
where $\theta_i$ is the rotor angle state and $w_i := \dot \theta_i$ is the frequency state of generator $i$.
We assume this power grid model consists of $n=10$ {generators} \citep{liu2017minimal, atawi2013advance}. %The topology of the grid is given by graph $\G$. Each node in the graph corresponds to a bus in the power network, and the links represent the transmission lines between buses. 
The state space model of the swing equation used for frequency control in power networks can be written as follows 
\begin{eqnarray*}
	\left[ \begin{array}{ccc}
	\dot \theta(t) \\
	\dot w(t) \end{array} \right]  &=& \left[ \begin{array}{ccc}
		0 & I  \\
		-M^{-1}L & -M^{-1} D  \end{array} \right] \left[ \begin{array}{ccc}
		\theta(t) \\
		w(t) \end{array} \right]\nonumber\\
		&&~+~\left[ \begin{array}{ccc}
		0\\
		M^{-1} \end{array} \right] u(t) \label{formation-1}
		\\
 	y(t) &=& \left[ \begin{array}{ccc}
 		\theta(t) \\
 		w(t) \end{array} \right], 	
\end{eqnarray*}
where $M$ and $D$ are diagonal matrices with inertia coefficients and damping coefficients of generators and their diagonals, respectively. 

We assume that both rotor angle and frequency are available for measurement at each generator. This means each subsystem in the power network has a phase measurement unit (PMU). The PMU is a device that measures the electrical waves on an electricity grid using a common time source for synchronization. The system is discretized to the discrete-time LTI system with state matrices $A$, $B$, and $C$ and the sampling time of $0.2$ second (the matrices are borrowed from \cite{ghazal}).

\begin{table*}
\begin{center}
 \caption{The values of approximation factor $\epsilon$ for 20 different  (S/A) schedules based on Algorithm \ref{al-approx} for the power system given in Example \ref{example_1}. 
	      Parameters $d_\text{s}$ and $d_\text{a}$ are defined by \eqref{def-d-s} and \eqref{def-d-a}, respectively.   The last column (in {\color{cyan} cyan} color) shows the value of approximation factor $\epsilon$ for 5 different sensor schedules where all actuators are active (see Definition \ref{def:obs}), and the last row (in {\color{blue} blue} color) shows the value of approximation factor $\epsilon$ for 4 different actuator schedules where all sensors are active (see Definition \ref{def:con}). Moreover, this table validates Theorem \ref{theorem3} as the sum of approximation factors for the two separate sensor and actuator scheduling problems is greater than the approximation factor for the joint S/A schedule and the gap is small. }\label{table_epsi_example}
%\tiny
\begin{tabular}{|c|l|c|c|c|c|}
\cline{3-6}
\multicolumn{2}{c|}{} & \multicolumn{4}{c|}{Average Number of Active Actuators} \\
\cline{3-6}
\multicolumn{2}{c|}{} & $d_{\text a}=2.3$ & $d_{\text a}=3.95$ & $d_{\text a}=6.35$ & {\footnotesize \color{cyan} Fully actuated} \\
\hline
\multirow{4}{*}{\begin{sideways} \footnotesize Avg.\# Active Sensors\, \end{sideways}} 
& $d_{\text s}=2.2$ & ${\bf4.1330} < {\color{blue}1.9434}+{\color{cyan}2.3281}$ & ${\bf  3.4365} <{\color{blue}1.2366}+{\color{cyan}2.3281} $ & ${\bf 2.9078} < {\color{blue}0.6983}+{\color{cyan}2.3281}$ &\color{cyan} $2.3281$  \\ 
\cline{2-6}
& $d_{\text s}=3.4$&${\bf 3.4794}<{\color{blue}1.9434}+{\color{cyan}1.7157}$&$  {\bf 2.7821}<{\color{blue}1.2366}+{\color{cyan}1.7157}$&${\bf 2.2595}<{\color{blue}0.6983}+{\color{cyan}1.7157}$&\color{cyan}$1.7157$\\
\cline{2-6}
& $d_{\text s}=6.05$&${\bf 2.8551}<{\color{blue}1.9434}+{\color{cyan}1.0691}$&${\bf 2.1545}<{\color{blue}1.2366}+{\color{cyan}1.0691}$&${\bf1.6316}<{\color{blue}0.6983}+{\color{cyan}1.0691}$&\color{cyan}$1.0691$\\
\cline{2-6}
& $d_{\text s}=8.85$&${\bf  2.4421}<{\color{blue}1.9434}+{\color{cyan}0.6399}$&${\bf  1.7370}<{\color{blue}1.2366}+{\color{cyan}0.6399}$&${\bf 1.2107}<{\color{blue}0.6983}+{\color{cyan}0.6399}$&\color{cyan}$0.6399$\\
\cline{2-6}
%& $d_{\text s}=6.05$&~$1.2634$~&~$1.0070$~&~$0.7706$~&~$0.5333$\\
%\cline{2-6}
& \footnotesize \color{blue} Fully sensed & \color{blue}$  1.9434$& \color{blue}$1.2366$& \color{blue}$ 0.6983$&$0$\\
\hline
\end{tabular}
\end{center}
\label{Table4}
\end{table*}

Figs. \ref{fig:1} and \ref{fig:2} depict joint sparse sensor and actuator schedules based on the proposed deterministic method (Algorithms \ref{al-approx}) for different values of $d_\text{s}$ and $d_\text{a}$. 

{To have a fair comparison, we normalize the resulting schedules such that the sum of all the scalings satisfies
\[\sum_{k=0}^{n-1} \sum_{i=1}^p \mathfrak s_i^2(k) = nd_{\text{s}},\]
 and 
 \[\sum_{k=0}^{n-1} \sum_{i=1}^m \mathfrak a_i^2(k) = nd_{\text{a}},\]
 where $d_{\text s}$ and $d_{\text a}$ are average numbers of active sensors and actuators, respectively.}

{In Table \ref{table_epsi_example}, the values of approximation factor $\epsilon$ for 20 different  (S/A) schedules based on Algorithm \ref{al-approx} for the power system given in Example \ref{example_1}. The last column (in {\color{cyan} cyan} color) shows the value of approximation factor $\epsilon$ for 5 different sensor schedules where all actuators are active (see Definition \ref{def:obs}) and the last row (in {\color{blue} blue} color) shows the value of approximation factor $\epsilon$ for 4 different actuator schedules where all sensors are active (see Definition \ref{def:con}). Based on Theorem \ref{theorem3} the $(\epsilon, d_{\text s}, d_{\text a})$-joint S/A schedule can be obtained by designing two separate sensor and actuator schedules.
This can be validated by Table \ref{table_epsi_example}, as the sum of approximation factors for the two separate sensor and actuator scheduling problems is greater than the approximation factor for the joint S/A schedule. For example, based on Table \ref{table_epsi_example}, by combining $(1.0691, 6.35)$-actuator schedule and $(0.6983, 6.05)$-actuator schedule, we get joint $(1.6316, 6.35, 6.05)$-S/A schedule. We should note that $1.6316 < 1.0691+0.6983 = 1.7674$; therefore, it follows that this joint schedule is also $(1.7674, 6.35, 6.05)$-S/A schedule (cf. Theorem \ref{theorem3}).

\begin{table*}
\begin{center}
	      \caption{The values of Hankel norm $\sigma_{\max}(\mathcal H_s)$ for 20 different  (S/A) schedules based on Algorithm \ref{al-approx} for the power system given in Example \ref{example_1}. Parameters $d_\text{s}$ and $d_\text{a}$ show average numbers of active sensors and actuators, respectively. For example, $0.1607$ is the value of Hankel norm of the (S/A) schedule with on average $d_{\text s}=2.2$ active sensors and $d_{\text a}=3.95$  active actuators.}\label{table_hankel_example}
\begin{tabular}{|c|l|c|c|c|c|}
\cline{3-6}
\multicolumn{2}{c|}{} & \multicolumn{4}{c|}{Average Number of Active Actuators} \\
\cline{3-6}
\multicolumn{2}{c|}{} & $d_{\text a}=2.3$ & $d_{\text a}=3.95$ & $d_{\text a}=6.35$ & $d_{\text a}=10$ \\
\hline
\multirow{4}{*}{\begin{sideways} \footnotesize Avg.\# Active Sensors\, \end{sideways}} 
& $d_{\text s}=2.2$ &~ $0.1297$~ &~ $0.1607$~ &~ $0.2020$~ &~ $0.2521$  \\
\cline{2-6}
& $d_{\text s}=3.4$&~$0.1672$~&~$  0.2134$~&~$0.2697$~&~0.3429\\
\cline{2-6}
& $d_{\text s}=6.05$&~$0.2271$~&~$0.2935$~&~$0.3718$~&~0.4714\\
\cline{2-6}
& $d_{\text s}=8.85$&~$0.2800$~&~$0.3632$~&~$0.4602$~&~$0.5838$\\
\cline{2-6}
%& $d_{\text s}=6.05$&~$1.2634$~&~$1.0070$~&~$0.7706$~&~$0.5333$\\
%\cline{2-6}
& $d_{\text s}=20$ &~$   0.3790$~&~$0.4981$~&~$ 0.6322$~&~$0.8036$\\
\hline
\end{tabular}
\end{center}
\end{table*}

\begin{table*}
\begin{center}
 \caption{The values of $| \log (\frac{\sigma_{\max}(\mathcal H_s)}{\sigma_{\max}(\mathcal H)})|$ for 20 different  (S/A) schedules based on Algorithm \ref{al-approx} for the power system given in Example \ref{example_1}. The matrix $\mathcal H_s$ and $\mathcal H$ are the Hankel matrices of \eqref{model-aa}-\eqref{model-a} and \eqref{model:bb}-\eqref{model:b}, respectively.
	      Parameters $d_\text{s}$ and $d_\text{a}$ are defined by \eqref{def-d-s} and \eqref{def-d-a}, respectively.  For example, $| \log (\frac{\sigma_{\max}(\mathcal H_s)}{\sigma_{\max}(\mathcal H)})|=0.7706$ for the (S/A) schedule with on average $d_{\text s} =6.05$ active sensors and $d_{\text a}=6.35$ active actuators. }\label{table_error_example}
%\tiny
\begin{tabular}{|c|l|c|c|c|c|}
\cline{3-6}
\multicolumn{2}{c|}{} & \multicolumn{4}{c|}{Average Number of Active Actuators} \\
\cline{3-6}
\multicolumn{2}{c|}{} & $d_{\text a}=2.3$ & $d_{\text a}=3.95$ & $d_{\text a}=6.35$ & $d_{\text a}=10$ \\
\hline
\multirow{4}{*}{\begin{sideways} \footnotesize Avg.\# Active Sensors\, \end{sideways}} 
& $d_{\text s}=2.2$ & $1.8237$ & $1.6094$ & $1.3808$ & $1.1592$  \\ 
\cline{2-6}
& $d_{\text s}=3.4$&~$1.5698$~&~$  1.3260$~&~$1.0918$~&~$0.8515$\\
\cline{2-6}
& $d_{\text s}=6.05$&~$1.2634$~&~$1.0070$~&~$0.7706$~&~$0.5333$\\
\cline{2-6}
& $d_{\text s}=8.85$&~$1.0543$~&~$0.7940$~&~$0.5573$~&~$0.3194$\\
\cline{2-6}
%& $d_{\text s}=6.05$&~$1.2634$~&~$1.0070$~&~$0.7706$~&~$0.5333$\\
%\cline{2-6}
& $d_{\text s}=20$ &~$  0.7515$~&~$0.4783$~&~$ 0.2399$~&~$0$\\
\hline
\end{tabular}
\end{center}
\end{table*}

In Tables \ref{table_hankel_example} and \ref{table_error_example}, the values of Hankel norm $\sigma_{\max}(\mathcal H_s)$ and relative error $| \log (\frac{\sigma_{\max}(\mathcal H_s)}{\sigma_{\max}(\mathcal H)})|$ for 20 different  (S/A) schedules are calculated based on Algorithm \ref{al-approx}, respectively.

The sparsity degree of each schedule is captured by  $d_\text{s}$ and $d_\text{a}$. Based on Table \ref{table_hankel_example}, as parameter $d_\text{a}$ (i.e., the average number of active actuators) increases both the number of non-zero scalings (i.e., activations) and the Hankel norm increase.
Similarly, as parameter $d_\text{s}$ (i.e., the average number of active sensors) increases both the number of non-zero scalings (i.e., activations) and the Hankel norm increase. As one expects, according to Table \ref{table_error_example}, the value of $| \log (\frac{\sigma_{\max}(\mathcal H_s)}{\sigma_{\max}(\mathcal H)})|$ decreases as the number of active sensors or actuators increases.

Finally, we should note that the value of $| \log (\frac{\sigma_{\max}(\mathcal H_s)}{\sigma_{\max}(\mathcal H)})|$ is less than the value of corresponding approximation factor $\epsilon$ (cf. Tables \ref{table_epsi_example} and \ref{table_error_example}). The values of $| \log (\frac{\rho(\mathcal H_s)}{\rho(\mathcal H)})|$ in fact  is less than $\epsilon$ for any Hankel-based performance measures based on Definition \ref{def:S/A} and \eqref{518}. }

\end{example}

%=================================================================
\section{Concluding Remarks} 
In this paper, we studied the problem of designing a joint sparse sensor and actuator (S/A) schedule of linear dynamical systems that retains the full observability and controllability of the system.
Based on recent advances in matrix reconstruction and graph sparsification literature, we provide a polynomial-time joint S/A schedule for a discrete time linear dynamical system. This joint S/A schedule on average selects only a constant number of sensors and actuators at each time step, while guaranteeing a control/estimation performance that approximates the fully sensed/actuated setting. We further prove the validity of separation principle for the system, showing that the problem can be decomposed into finding sensor and actuator schedules separately.

\bibliographystyle{IFAC}        % Include this if you use bibtex 
\begin{spacing}{1.1}
\bibliography{main_Milad}
\end{spacing}
\appendix

\addtolength{\textheight}{-12cm}

\end{document}